\theoremstyle{definition}
\newtheorem{theorem}{Theorem}[section]
\newtheorem{fact}[theorem]{Fact}
\newtheorem{proposition}[theorem]{Proposition}
\newtheorem{example}[theorem]{Example}
\newtheorem{definition}[theorem]{Definition}
\newtheorem{lemma}[theorem]{Lemma}
\newtheorem{remark}[theorem]{Remark}
\title{Strategic Voting Under Uncertainty About the Voting Method\thanks{For helpful comments, we wish to thank Mikayla Kelley, the three anonymous referees for TARK, and an audience at the Center for Human-Compatible AI's Interdisciplinary Insights Seminar on May 1, 2019.}}
\author{Wesley H. Holliday
\institute{University of California, Berkeley}
\email{wesholliday@berkeley.edu}
\and
Eric Pacuit
\institute{University of Maryland}
\email{epacuit@umd.edu}
}
\begin{document}
\maketitle

\begin{abstract}Much of the theoretical work on strategic voting makes strong assumptions about what voters know about the voting situation. A strategizing voter is typically assumed to know how other voters will vote and to know the rules of the voting method. A growing body of literature explores strategic voting when there is uncertainty about how others will vote. In this paper, we study strategic voting when there is uncertainty about the voting method. We introduce three notions of manipulability for a set of voting methods: \textit{sure}, \textit{safe}, and \textit{expected} manipulability. With the help of a computer program, we identify voting scenarios in which uncertainty about the voting method may reduce or even eliminate a voter's incentive to misrepresent her preferences. Thus, it may be in the interest of an election designer who wishes to reduce strategic voting to leave voters uncertain about which of several reasonable voting methods will be used to determine the winners of an election.\end{abstract}

\section{Introduction}\label{IntroSection}

A well-known fact in the study of voting methods is that {\em strategic voting} cannot be avoided (see, e.g., \cite{Taylor2005,Meir2018}). A voter has an incentive to  vote strategically, or  {\em manipulate},  if she will achieve a preferable  outcome by misrepresenting her true preferences about the candidates.   There are two fundamental results showing that every reasonable voting method is susceptible to strategic voting. The Gibbard-Satterthwaite theorem \cite{Gibbard:73a,Satterthwaite:75a} shows that every voting method for three or more candidates that is resolute (i.e., always elects a single winner), unanimous (i.e., never elects a candidate $y$ if all voters rank another candidate $x$ above~$y$), and non-dictatorial (i.e., does not always elect the favorite candidate of some distinguished voter) can be manipulated. To study strategizing for irresolute voting methods, which may select more than one candidate as a winner,  additional assumptions are needed about the voters' rankings of {\em sets} of candidates.  The Duggan-Schwartz theorem  \cite{DugganSchwartz:2000} shows  that every voting method---whether resolute or irresolute---for three or more candidates that is non-imposed (i.e., any candidate can be elected) and has no nominator (i.e., no voter  can always ensure that her top ranked candidate is among the set of winners) can be manipulated by a {\em optimist} or {\em pessimist}, defined as a voter who compares sets of candidates in terms of her highest (resp.~lowest) ranked candidate in each set.  Related results by Kelly  \cite{kelly:1977}, Benoit \cite{benoit}, Feldman \cite{feldman}, and G\"ardenfors \cite{gardenfors}, among others, show that manipulation is unavoidable under different assumptions about   how to lift a voter's ranking of candidates to sets of candidates. 

In the theorems mentioned above, a voter's decision to manipulate relies on strong assumptions about what the voter {\em knows} about the voting situation.  Two key assumptions are:   (1) the strategizing voter knows how the other voters in the population will vote or have voted; (2) the strategizing voter knows and understands the rules of the voting method. There is a growing body of literature that explores weakening assumption (1) \cite{contizer-walsh-xia,osborne-rubinstein,chopra-pacuit-parikh,merrill,reijngoud-endriss,vanditmarsch-lang-sadine}.   One reason voters may be uncertain about how other voters will vote, even if they know other voters' true preferences, is because they  think that other voters  might  be voting strategically.   There are  sophisticated game-theoretic  models that explore  the implications of weakening assumption (1) for this reason (consult part II of \cite{Meir2018} for an overview of this literature).   Whatever the reason, when voters have partial information about how other voters will vote, the decision about whether  to manipulate is more complicated, since voters are uncertain about which sets of winners will result  from their decision to manipulate.  One natural assumption is that voters are risk averse and only  strategize when doing so cannot lead to a worse outcome (according to their true preferences) and might  lead to a better outcome \cite{contizer-walsh-xia}  (cf. \cite{reijngoud-endriss} for other assumptions about how voters decide to strategize).  

The potentially negative aspects of strategic voting (see, e.g., \cite{Satterthwaite1973,Conitzer2016}) have motivated the search for barriers against manipulation.\footnote{It has been argued that strategic voting also has positive aspects (see, e.g., \cite{Dowding2008}). In this paper, we will not debate whether the negative aspects outweigh the positive. Our point is about what barriers are available \textit{if} one wishes to reduce strategic voting.} One potential barrier, investigated mostly in the AI literature, is the computational complexity of determining a profitable manipulation for a given voting method (see, e.g., \cite{Faliszewski2010,Conitzer2016}). In this paper, we will investigate another potential barrier against strategic voting: adding uncertainty about the voting method that will be used to determine the set of winners. Thus, we weaken assumption (2) above. Suppose that $S$ is a set of voting methods.  We assume that the voters know that one of the methods from $S$ will be used to select the winners, but they do not know which one---hence we call $S$ the \textit{uncertainty set}.  To simplify the initial study, in this paper we  assume that a strategizing voter knows how the other voters will vote or have voted.  This assumption is not unrealistic in certain voting situations; for example, in a hiring committee meeting in which committee members are asked to sequentially report their rankings of job candidates, the committee member last in the sequence will already know how all the other members have ranked the candidates. 

As in the situation where a voter has partial information about how other voters will vote, in the situation where a voter has uncertainty about the voting method, the voter must take into account different possible winning sets of candidates when deciding whether to strategize.  Given a ranking of sets of candidates, we consider three ways for a voter to solve this decision problem.  From the most conservative to the least conservative, the three types of manipulation  are: 
 
 \begin{enumerate}
 \item {\em Sure manipulation}: It is certain that submitting an insincere ranking will lead to a better outcome no matter which voting method is used. 
  
 \item {\em Safe manipulation}: It is certain that submitting an insincere ranking will lead to an outcome that is at least as good and might lead to a better outcome.\footnote{In \cite{Ianovski2011} and \cite{Slinko2014} the term `safe manipulation' is used in a different sense and in a different context involving manipulation by coalitions of voters.}   
 
 \item {\em Expected manipulation}: Given a probability distribution on the set of voting methods, submitting an insincere ranking is more likely to lead to a better outcome than to lead to a worse outcome.å
 \end{enumerate}
 We aim to find combinations of voting methods that reduce or even eliminate a voter's incentive to strategize (given one of the above notions of manipulation and a ranking of sets of candidates).     That is, does uncertainty about the voting method reduce the chance that voters will strategize? 

If so, it may be in the interest of an election designer who wishes to reduce strategic voting to leave voters uncertain about which of several voting methods will be used to determine the winners of an election. Though perhaps implausible in the case of a democratic political election, creating such uncertainty does not seem an implausible choice by, e.g., the Chair of a hiring committee. Moreover, if committee members judge each of the possible voting methods to be reasonable and also wish to reduce strategic voting, they may well endorse the choice of the Chair to create such uncertainty.\footnote{Naturally, the Chair should be required to fix the voting method to be used \textit{before} seeing the votes, so that the Chair cannot pick the voting method that produces his or her favored outcome based on the already submitted votes.}

Another way that voters may be uncertain about the outcome of an election is that a voting method may use randomization to select the winners.  A {\em probabilistic voting method} assigns a lottery over the set of candidates to each collection of rankings of the candidates for the voters (see \cite{brandt-survey} for a recent survey of results about probabilistic voting methods).    Gibbard \cite{gibbard1977} noted that a {\em random dictatorship} is immune to strategizing.\footnote{Cf.~\cite{Hylland1980}, where voters are assumed to submit utility functions over the set of candidates. In this paper, we restrict attention to the case where voters submit rankings of the candidates.}   Suppose that $S$ is a set of dictatorships---one for each voter. After the voters submit their rankings, one of the dictatorships in $S$ is chosen, and the candidate  ranked first by the dictator is selected as the winner.   Assuming voters submit rankings of the candidates without ties, this method always selects a single winner.   It is clear that no voter has an incentive to misrepresent their top choice, since either they will not be chosen as the dictator, in which case their ranking will be ignored, or they will be chosen as the dictator, in which case their top choice is guaranteed to win.  Of course, each dictatorship is immune to strategizing by itself.   Beyond this example of random dictatorship, there are general results suggesting that randomization can be an effective barrier to manipulation \cite{nunez-pivato,aziz-brandl-brandt-brill}.   

The relationship between our work and probabilistic voting methods is clarified at the end of this paper. In short, although any uncertainty set $S$ of voting methods determines a probabilistic voting method $F_S$, none of our three notions of manipulation above with respect to $S$ is equivalent to a standard notion of manipulation with respect to the probabilistic voting method $F_S$. In addition, since the mapping $S\mapsto F_S$ from uncertainty sets of voting methods to probabilistic voting methods is not one-to-one, there is no obvious definition of sure, safe, or even expected manipulation for probabilistic voting methods such that for any uncertainty set $S$ of voting methods, $S$ is susceptible to sure/safe/expected manipulation if and only if $F_S$ is susceptible to sure/safe/expected manipulation. Thus, there is no obvious way to reframe our investigation purely in the language of probabilistic voting methods.

Both our approach and that of probabilistic voting methods involve a loss of transparency at the time of the vote. However, the approaches seem to differ in the \textit{explainability} of the outcome after the vote. In our approach, the election designer may simply inform voters after the vote of which voting method $f$ from $S$ was used to determine the outcome of the election, so the algorithm for $f$ can be used to explain why the election had one outcome rather than another. This may be a more intelligible reason for voters than ``this was the outcome of the lottery.'' Abstractly, the difference is that in our setting, initially the voters have subjective uncertainty about which deterministic mechanism will be used; but after this subjective uncertainty is removed, there is an explanation of why the election had one outcome rather than another in terms of the deterministic mechanism applied to the voters' inputs.\footnote{That the voters have subjective uncertainty about which method in $S$ will be used does not imply that the election designer uses a chance process to determine which method will be used. As in standard voting theory, we make no assumption about how the election designer chooses the voting method, whether by randomization, consideration of axiomatic properties, etc.} By contrast, with a probabilistic voting method, the only available ``explanation'' of the outcome is in terms of an inherently stochastic mechanism applied to the voters' inputs, which may fail to explain why the election had one outcome rather than another with non-zero probability (cf.~\cite[p.~24]{Jeffrey1971}, \cite[p.~238]{Railton1981}).\footnote{In the case where $f$ is an irresolute voting method that outputs a winning set $X$ of candidates in a given election, a single winner may be chosen by lottery; but still we have an explanation in terms of the algorithm for $f$ of why the field was narrowed to $X$ rather than some other winning set before applying tiebreaking.} Of course, whether such a contrastive explanation is desirable may vary from case to case.

Our findings in this paper are of two main types: analytic results with mathematical proofs and data from computer searches. For these searches we generalize to sets of voting methods a standard index of manipulability for a single voting method, known as the Nitzan-Kelly index \cite{Nitzan1985,Kelly1985}, which gives the percentage of profiles with $n$ candidates and $m$ voters in which at least one voter in the profile has an incentive to manipulate. In probabilistic terms, assuming the Impartial Culture Model \cite{Guilbaud1952} in which every profile is equally probable, this index gives the probability that in a randomly chosen profile, at least one voter has an incentive to manipulate. Similarly, we report data on the percentage of profiles in which at least one voter has an incentive to manipulate against a set $S$ of voting methods. All of the data in the paper were produced by a Python script available at \href{https://github.com/epacuit/strategic-voting}{https://github.com/epacuit/strategic-voting}.

The rest of the paper is organized as follows.   The next section introduces our formal framework, including the definitions of the voting methods we study in this paper.  The three notions of manipulation are studied in Sections \ref{Section:SureManipulation}, \ref{Section:SafeManipulation}, and \ref{Section:ExpectedManipulation}, respectively.    Section \ref{Section:ProbabilisticSocialChoice} discusses the connections with probabilistic social choice.  Finally, in Section \ref{Section:Conclusion} we conclude with some pointers to future work. 

 \section{Preliminaries}\label{Section:Preliminaries}
 
Let $C$ be a  nonempty finite set of \textit{candidates} and $V$ a nonempty finite set of \textit{voters}.    We use lower case letters from the beginning and end of the alphabet $a, b, c, \ldots, x, y, z, \ldots$   for    elements of $C$ and lower case letters from the middle of the alphabet  $i, j, k,\ldots$ for  elements of $V$. 

   A voter's {\em  ranking} of the set of candidates   is a strict linear order $P$ on $C$.  Let $L(C)$ be the set of all strict linear orders on $C$.  For $P\in L(C)$ and $X\subseteq C$,  let $\mathrm{max}(X, P)$ be the maximally ranked element of $X$, i.e.,  $\mathrm{max}(X, P) = y$ where for all $x\in X$, if $x\ne y$, then  $y\mathrel{P} x$ (such an element always exists since $P$ is assumed to be a strict linear order). Similarly, let $\mathrm{min}(X, P)$ be the minimally ranked element of $X$, i.e.,  $\mathrm{min}(X, P) = y$ where for all $x\in X$, if $x\ne  y$, then $x\mathrel{P} y$.  We say $y\in C$ is ranked $r$th by $P$ when $|\{x\in C\ |\ x\mathrel{P} y\}| = r-1$.\footnote{As usual, for a set $A$, $|{A}|$ is the number of elements in ${A}$.}    So, for example,  $y$ is ranked 1st by $P$ if and only if  $\mathrm{max}(C, P)=y$.  To simplify our notation, we specify a ranking by simply listing candidates from highest to lowest in the ranking, e.g., $a \, b \, c \, d$ for the ranking $a\mathrel{P} b\mathrel{P} c\mathrel{P} d$. 
   
A \textit{profile $\mathbf{P}$ for $(C,V)$} is an element of $L(C)^V$, i.e., a function assigning to each $i\in V$ a relation $\mathbf{P}_i\in L(C)$. If $|C|=n$ and $|V|=m$, we call a profile for $(C,V)$ an \textit{$(n,m)$-profile}. A \textit{pointed profile} for $(C,V)$ is a pair $(\mathbf{P},i)$ where $\mathbf{P}$ is a profile and $i\in V$. For $x,y\in C$, let $\mathbf{P}(x,y)=\{i\in V\mid x\mathbf{P}_i y\}$. We write $\mathbf{N}_\mathbf{P}(x,y)$ for the number of  voters in $\mathbf{P}$ ranking $x$ above $y$, i.e., $\mathbf{N}_\mathbf{P}(x,y) = |\mathbf{P}(x,y)|$.    We say that a {\em majority prefers $x$ to $y$} in $\mathbf{P}$, denoted $x >^M_\mathbf{P} y$,  when $\mathbf{N}_\mathbf{P}(x,y) > \mathbf{N}_\mathbf{P}(y,x)$.     Let $Net_\mathbf{P}(x,y) = \mathbf{N}_{\mathbf{P}}(x,y) - \mathbf{N}_{\mathbf{P}}(y,x)$.   So $x >_\mathbf{P}^M y$ if and only if $Net_\mathbf{P}(x,y) > 0$.  Finally, from the strict linear order $\mathbf{P}_i$ we define the weak relation $\mathbf{R}_i$ by $x\mathbf{R}_iy$ iff $x\mathbf{P}_iy$ or $x=y$.

A  \textit{voting method} for $(C, V)$ is a function assigning a nonempty subset of candidates, called the \textit{winning set},  to each profile, i.e., $f:L(C)^V\rightarrow\wp(C)\setminus \{\varnothing\}$.  The following are the voting methods we will discuss in this paper (also see, e.g., \cite{Pacuit2019}). 

(1) Positional scoring rules:  Suppose $\langle s_1, s_2, \ldots, s_m\rangle$ is a vector of numbers, called a {\em scoring vector}, where for each $l=1,\ldots, m-1$, $s_l \ge s_{l+1}$. Suppose ${P}\in L(C)$.   The {\em score of $x\in C$ given $P$} is  $score(P,x)=s_r$ where $r$ is the rank of $x$ in $P$. For each profile $\mathbf{P}$ and $x\in C$, let $score(\mathbf{P},x)= \sum_{i=1}^n score(\mathbf{P}_i, x)$.   A positional scoring rule for a scoring vector $\vec{s}$ assigns to each profile $\mathbf{P}$  the set of candidates that maximize their score according to $\vec{s}$ in $\mathbf{P}$.   That is, a voting method $f$ is a positional scoring rule for a scoring vector $\vec{s}$ provided that for all $\mathbf{P}\in L(C)^V$, $f(\mathbf{P})=\mathrm{argmax}_{x\in C} score(\mathbf{P}, x)$.   We study two such rules:

\smallskip 

$\mathtt{Borda}$:  the positional scoring rule for ${\langle n-1, n-2, \ldots, 1,  0\rangle}$.

\smallskip 

$\mathtt{Plurality}$: the positional scoring rule for $\langle 1, 0, \ldots,   0\rangle$.
\smallskip 

(2) The {\em Condorcet winner} in a profile $\mathbf{P}$ is a candidate $x\in C$ that is the maximum of the majority ordering, i.e., for all $y\in C$, if $x\ne y$,  then $x>_\mathbf{P}^M y$.   The Condorcet voting method is:

$\mathtt{Condorcet}(\mathbf{P})=\begin{cases} \{x\} & \text{if $x$ is the Condorcet winner in $\mathbf{P}$}\\
C & \text{if there is no Condorcet winner}.
\end{cases}$

(3) The {\em win-loss} record for a candidate $x\in C$ in a profile $\mathbf{P}$ is the number of candidates $z$ such that a majority prefers  $x$ to $z$ in $\mathbf{P}$ minus the   number of candidates $z$ such that a majority prefers  $z$ to $x$ in $\mathbf{P}$.   Formally, for each $\mathbf{P}$ and $x\in C$, let $wl_\mathbf{P}(x) = |\{z \ |\ Net_\mathbf{P}(x,z) > 0\}| - |\{z\ |\ Net_\mathbf{P}(z,x) > 0\}|$. The $\texttt{Copeland}$ winners are the candidates with maximal win-loss records:  $\mathtt{Copeland}(\mathbf{P}) =  \mathrm{argmax}_{x\in C}(wl_\mathbf{P}(x))$. 
 
(4) The \textit{support} for a candidate $x\in C$ in a profile $\mathbf{P}$ is found by calculating for each candidate $y\neq x$ the number of voters who rank $x$ above $y$ and then taking the minimum of these values. Formally, for each $\mathbf{P}$ and $x\in C$, let $supp(x,\mathbf{P}) = \mathrm{min}(\{\mathbf{N}_\mathbf{P}(x,y)\ |\ y\in C, y\ne x\})$. The \texttt{MaxMin} (also known as Simpson's Rule) winners are the candidates with maximal support: $\texttt{MaxMin}(\mathbf{P}) = \mathrm{argmax}_{x\in C}(supp(x,\mathbf{P}))$.

(5) $\texttt{PluralityWRunoff}$: Calculate the plurality score for each candidate---the number of voters who rank the candidate first. If there are 2 or more candidates with the highest plurality score, remove all other candidates and select the $\texttt{Plurality}$ winners from the remaining candidates. If there is one candidate with the highest plurality score, remove all candidates except the candidates with the highest or second-highest plurality score, and select the $\texttt{Plurality}$ winners from the remaining candidates.
 
(6)  $\texttt{Hare}$:  Iteratively remove all candidates with the fewest number of voters who rank them first, until there is a candidate who is a \textit{majority  winner}, i.e., ranked first by a majority of voters.   If, at some stage of the removal process, all remaining candidates have the same number of  voters who rank them first (so all candidates would be removed), then all remaining candidates  are selected as winners.

(7) $\texttt{Coombs}$:  Iteratively remove all candidates with the most  number of voters who rank them last, until there is a candidate who is a majority  winner.   If, at some stage of the removal process, all remaining candidates have the same number voters who rank them last (so all candidates would be removed), then all remaining candidates  are selected as winners.

(8)  $\texttt{Baldwin}$:  Iteratively remove all candidates with the smallest \texttt{Borda} score, until there is a single candidate remaining.   If, at some stage of the removal process, all remaining candidates have the same \texttt{Borda} score (so all candidates would be removed), then all remaining candidates  are selected as winners.

(9)  Rather than removing candidates with the lowest \texttt{Borda} score, the next two methods remove all candidates who have a \texttt{Borda} score below the average \texttt{Borda} score for all candidates.    There are two versions of this voting method \cite{niou}: $\texttt{StrictNanson}$ iteratively removes all candidates whose \texttt{Borda} score is strictly smaller than the average \texttt{Borda} score (of the candidates remaining at that stage), until one candidate remains.   $\texttt{WeakNanson}$ iteratively removes all candidates whose \texttt{Borda} score is less than or equal to the average \texttt{Borda} score (of the candidates remaining at that stage), until one candidate remains.  If, at some stage of the removal process, all remaining candidates have the same \texttt{Borda} score (so all candidates would be removed), then all remaining candidates  are selected as winners.

\begin{definition} Let $\mathsf{Methods}$ be the set of 11 voting methods described above.
\end{definition}

We are interested in situations in which the strategizing voter is uncertain about which voting method will be used to determine the winner(s). As noted in Section \ref{IntroSection}, we represent such uncertainty by a set $S$ of voting methods, called the \textit{uncertainty set}. This is the set of voting methods $f$ such that it is consistent with the strategizing voter's knowledge that $f$ will be~used.

Each of the above methods may select more than one winner for a given profile. Thus, to discuss strategizing, one needs a notion of when one set of candidates is ``preferable'' to another for a particular voter. The following are the standard notions from the literature on strategic voting (see, e.g., \cite{Taylor2005}). 
 
\begin{definition}\label{DominanceNotions} Let $\mathbf{P}$ be a profile, $i\in V$, and $X,Y\subseteq C$. We define the following \textit{dominance notions}, each of which has a nonstrict ($\geq$) and strict ($>$) version: \\[2pt]
\noindent \textit{weak}: (a) $X\geq_{\mathbf{P}_i}^{weak} Y$ iff $\forall x\in X$ $\forall y\in Y$: $x \mathbf{R}_i y$; \quad (b) $X>_{\mathbf{P}_i}^{weak} Y$ iff  $X\geq_{\mathbf{P}_i}^{weak} Y$ and $\exists x\in X$ $\exists y\in Y$: $x \mathbf{P}_i y$.

\noindent \textit{optimistic}: (a) $X\geq_{\mathbf{P}_i}^{Opt} Y$ iff $\mathrm{max} (X,\mathbf{P}_i) \mathrel{\mathbf{R}_i} \mathrm{max} (Y,\mathbf{P}_i)$;\quad (b) $X>_{\mathbf{P}_i}^{Opt} Y$ iff $\mathrm{max} (X,\mathbf{P}_i) \mathrel{\mathbf{P}_i}  \mathrm{max} (Y,\mathbf{P}_i)$.

\noindent \textit{pessimistic}: (a) $X\geq_{\mathbf{P}_i}^{Pes} Y$ iff $\mathrm{min} (X,\mathbf{P}_i) \mathrel{\mathbf{R}_i} \mathrm{min} (Y,\mathbf{P}_i)$;\quad (b) $X>_{\mathbf{P}_i}^{Pes} Y$ iff $\mathrm{min} (X,\mathbf{P}_i) \mathrel{\mathbf{P}_i} \mathrm{min} (Y,\mathbf{P}_i)$.

\end{definition}

\section{Sure manipulation}\label{Section:SureManipulation}

If a voter is uncertain about which voting method from a set $S$ of voting methods will determine the winners of an election, the most conservative approach to strategic voting is to submit an insincere ranking if and only if the voter is sure that by doing so, the set of winners will be strictly better from the point of view of her true ranking (and the relevant dominance notion) \textit{no matter which voting method from $S$ is used}. This approach is appropriate when there is a cost to submitting an insincere ranking that a voter is only willing to incur if it will surely improve the set of winners. For example, in the context of a hiring committee in which committee members know each other's true preferences over the candidates, e.g., through deliberation, there may be a social cost in submitting an insincere ranking of the candidates, which a committee member is willing to bear only if doing so is sure to result in a preferable set of winners. These considerations motivate the following definition.

\begin{definition} Let $(\mathbf{P},i)$ be a pointed profile, $\Delta$ a dominance notion, and $S$ a set of voting methods. We say that $(\mathbf{P},i)$ \textit{witnesses sure $\Delta$-manipulability for $S$} if and only if there is a profile $\mathbf{P}'$ differing from $\mathbf{P}$ only in $i$'s ranking such that $\forall f\in S: f(\mathbf{P}')>_{\mathbf{P}_i}^\Delta f(\mathbf{P})$. We then say that $(\mathbf{P},i)$ witnesses sure $\Delta$-manipulability for $S$ \textit{by transitioning to $\mathbf{P}'$} and that $i$ \textit{has a sure $\Delta$-manipulation incentive under $S$ to transition to $\mathbf{P}'$}. A profile $\mathbf{P}$ \textit{witnesses sure $\Delta$-manipulability for $S$} if and only if there is an $i\in V$ such that $(\mathbf{P},i)$ witnesses sure $\Delta$-manipulability for $S$. Finally, we say that $S$ is \textit{susceptible to sure $\Delta$-manipulation for $(n,m)$} if and only if there is anå $(n,m)$-profile $\mathbf{P}$ that witnesses sure $\Delta$-manipulability for~$S$.
\end{definition}

As an initial example, consider the $\texttt{Borda}$ method. Recall that $\texttt{Borda}$ is not resolute, so the $\texttt{Borda}$ winning set for a particular profile may contain multiple candidates. To obtain a resolute method from an irresolute method $f$ such as $\texttt{Borda}$, we may fix a tiebreaking mechanism, understood as a strict linear order $L$ on $C$, and define  $f_L$ to be the resolute voting method defined by $f_L(\mathbf{P})= \mathrm{max}(f(\mathbf{P}), L)$. An especially natural example of uncertainty about the voting method arises when there is uncertainty about the tiebreaking mechanism to be used for a fixed voting method. Moreover, such uncertainty can be a barrier to sure manipulation. For example, for $n=3$, $\{\texttt{Borda}\}$ is susceptible to sure weak dominance manipulation, but this is not so when there is complete uncertainty about the tiebreaking mechanism.

\begin{proposition} For any $m\geq 4$, $\{\texttt{Borda}_L \mid L\mbox{ a linear order on }C\}$ is not susceptible to sure weak dominance manipulation for $(3,m)$.
\end{proposition}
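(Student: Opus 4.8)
The plan is to argue by contradiction, using two features of the uncertainty set $S=\{\texttt{Borda}_L\mid L\text{ a linear order on }C\}$: every $\texttt{Borda}_L$ is resolute, and $S$ contains a tiebreaker placing any desired candidate on top. Since $\texttt{Borda}$ is neutral and $S$ is closed under relabeling candidates, I would first reduce to the case where the manipulator $i$ has true ranking $\mathbf{P}_i=a\,b\,c$. Suppose $(\mathbf{P},i)$ witnesses sure weak dominance manipulability by transitioning to some $\mathbf{P}'$ (differing from $\mathbf{P}$ only in $i$'s ballot), and write $W=\texttt{Borda}(\mathbf{P})$ and $W'=\texttt{Borda}(\mathbf{P}')$. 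Because each $\texttt{Borda}_L$ is resolute, $\texttt{Borda}_L(\mathbf{P})$ and $\texttt{Borda}_L(\mathbf{P}')$ are singletons, and between singletons weak dominance collapses to strict preference; so the manipulation requirement becomes $\mathrm{max}(W',L)\mathrel{\mathbf{P}_i}\mathrm{max}(W,L)$ for every linear order $L$ on $C$.

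The crux---the step that distinguishes $S$ from $\{\texttt{Borda}\}$ alone---is to show $W\cap W'=\varnothing$. Indeed, if some $w\in W\cap W'$, then taking $L$ with $w$ on top forces $\mathrm{max}(W,L)=\mathrm{max}(W',L)=w$, and the required strict improvement fails for that $L$. Since $|C|=3$, disjoint nonempty $W,W'$ can only have sizes $\{1,1\}$ or $\{1,2\}$; and running through the remaining tiebreakers (each member of a two-element winning set is surfaced as the winner by some $L$) forces exactly three configurations: (A) $W=\{x\}$ and $W'=\{y\}$ with $y\mathrel{\mathbf{P}_i}x$; (B) $W=\{c\}$ and $W'=\{a,b\}$; and (C) $W=\{b,c\}$ and $W'=\{a\}$, where in (B) both manipulated winners must beat $x$, forcing $x=c$, and in (C) the lone manipulated winner must beat both members of $W$, forcing it to be $a$.

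The last step is a uniform bound on how far a single ballot change can move a score difference. Since only $i$'s ballot changes, each candidate $z$ has $score(\mathbf{P}',z)-score(\mathbf{P},z)=\Delta_z$ with $\Delta_z=score(\mathbf{P}'_i,z)-score(\mathbf{P}_i,z)$, and for any two candidates $z,w$ one has $\Delta_z-\Delta_w\le 2-\bigl(score(\mathbf{P}_i,z)-score(\mathbf{P}_i,w)\bigr)$, because one Borda ballot can separate two candidates by at most $2$. With $\mathbf{P}_i=a\,b\,c$ this gives $\Delta_a-\Delta_b\le 1$, $\Delta_a-\Delta_c\le 0$, and $\Delta_b-\Delta_c\le 1$. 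In each of (A)--(C), the defining inequalities of $W$ and $W'$ force some losing candidate to overtake a rival: for an appropriate pair $(z,w)$ among $(a,b),(a,c),(b,c)$ the integer difference $s'_z-s'_w$ must pass from $\le -1$ (strictly behind in $\mathbf{P}$) to $\ge 1$ (strictly ahead in $\mathbf{P}'$), i.e.\ $\Delta_z-\Delta_w\ge 2$, contradicting the bound just derived. This eliminates all three configurations.

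I expect the disjointness observation to be the conceptual obstacle: it is precisely what is unavailable for a single irresolute method, where e.g.\ $\{a,b\}>_{\mathbf{P}_i}^{weak}\{b\}$ is a genuine manipulation even though $b$ is shared, whereas the full family of tiebreakers rules such transitions out. The one configuration not blocked by the trivial remark that $i$ cannot raise her own top candidate (since $\Delta_a\le 0$) is the transition $W=\{c\}\to W'=\{b\}$ in case (A); the point there is that moving $b$ ahead of $c$ by a strictly positive net margin, when $c$ was strictly ahead, needs a two-point swing that no single ballot change affords. Finally, note the argument uses nothing about the size of $m$; it holds for every $m\ge 1$, the hypothesis $m\ge 4$ presumably being kept only to match the regime in which $\{\texttt{Borda}\}$ is itself manipulable.
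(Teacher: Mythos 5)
Your proof is correct, and while it opens with the same key observation as the paper's, it finishes by a genuinely different and more self-contained route. The shared first step: because $S$ contains every tiebreaking order $L$, and disjointness lets you pick an $L$ surfacing any $w'\in W'$ against any $w\in W$, sure manipulation forces every candidate in $W'=\texttt{Borda}(\mathbf{P}')$ to be strictly $\mathbf{P}_i$-preferred to every candidate in $W=\texttt{Borda}(\mathbf{P})$ --- this is exactly the paper's observation (i). The divergence is in how the surviving configurations are eliminated. The paper case-analyzes what $W$ may contain: not $i$'s top candidate (trivially no incentive under a tiebreaker favoring it), not $i$'s bottom candidate (since $i$ cannot expel her last-ranked candidate from the \texttt{Borda} winning set consistently with (i)), and not $\{\beta\}$ alone --- the last case disposed of by citing the fact that \texttt{Borda} is not single-winner manipulable for three candidates. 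You instead enumerate the admissible pairs $(W,W')$ directly and kill all of them with one quantitative lemma, the score-swing bound $\Delta_z-\Delta_w\le 2-\bigl(score(\mathbf{P}_i,z)-score(\mathbf{P}_i,w)\bigr)$, showing the required overtaking would need a two-point swing that a single ballot change cannot supply given $i$'s sincere ordering. This buys a fully self-contained argument: your case (A) in effect re-proves the single-winner non-manipulability fact the paper imports from Taylor, and the same bound handles uniformly the cases the paper treats one by one. Your closing remark is also accurate: the impossibility direction needs nothing from $m\ge 4$; that hypothesis matters only for the companion claim that $\{\texttt{Borda}\}$ by itself \emph{is} susceptible, i.e., for the ``eliminates'' framing in the surrounding text.
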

\begin{proof} Suppose for contradiction that there is a pointed profile $(\mathbf{P},i)$ that witnesses sure weak dominance manipulation for $S=\{\texttt{Borda}_L \mid L\mbox{ a linear order on }C\}$ by transitioning to $\mathbf{P}'$. By the choice of $S$, it follows that (i) every candidate in the \texttt{Borda} winning set for $\mathbf{P}'$ is strictly preferred according to $\mathbf{P}_i$ to every candidate in the \texttt{Borda}  winning set in $\mathbf{P}$. Let $i$'s ranking of the candidates in $\mathbf{P}$ be $\alpha\beta\gamma$.

First, the $\mathtt{Borda}$ winning set in $\mathbf{P}$ cannot contain $\alpha$. For if it does, then there is no incentive to manipulate under $\mathtt{Borda}_{\alpha > \beta > \gamma}$ or $\mathtt{Borda}_{\alpha > \gamma > \beta}$, contradicting sure weak dominance manipulation. Second, the $\mathtt{Borda}$ winning set in $\mathbf{P}$ cannot contain $\gamma$. For if it does, then by (i), the $\mathtt{Borda}$ winning set in $\mathbf{P}'$ cannot contain $\gamma$; but there is no $\mathbf{P}'$ differing from $\mathbf{P}$ only in $i$'s ranking such that $\gamma$ is in the $\mathtt{Borda}$ winning set in $\mathbf{P}$ but not in the $\mathtt{Borda}$ winning set in $\mathbf{P}'$. Finally, we claim that the $\mathtt{Borda}$ winning set in $\mathbf{P}$ cannot be $\{\beta\}$. Suppose it is. By (i), the $\mathtt{Borda}$ winning set in $\mathbf{P}'$ cannot contain $\beta$ or $\gamma$, so it must be $\{\alpha\}$. But this contradicts the fact that for 3 candidates, $\mathtt{Borda}$ is not single-winner manipulable \cite[p.~57]{Taylor2005}. Having ruled out all possible $\mathtt{Borda}$ winning sets in $\mathbf{P}$, we obtain a contradiction.
\end{proof}

In this paper, we will focus on the case where $S$ contains different voting methods, as in the following example, rather than the same voting method with different tiebreaking rules.  

\begin{example} Consider the following $(3,4)$-profile $\mathbf{P}$ for $C=\{a, b, c\}$ and $V=\{1,2,3,4\}$:\\

\begin{minipage}{1in}
\begin{center}
\begin{tabular}{c|c|c|c}
$1$ & $2$ & $3$ & $4$ \\\hline
$a$ & $b$ & $c$ & $c$\\
$b$ & $c$ & $a$ & $b$\\
$c$ & $a$ & $b$ & $a$\\
\end{tabular}
\end{center}
\end{minipage}\hspace{.2in}\begin{minipage}{4.5in}
The winning sets for the voting methods are: 
\begin{itemize}
\item $\{c\}$  for $\texttt{Borda}$, $\texttt{Copeland}$, $\texttt{Hare}$, \texttt{WeakNanson}, $\texttt{Plurality}$, $\texttt{PluralityWRunoff}$; 
\item  $\{b, c\}$ for \texttt{Baldwin}, $\texttt{Coombs}$, \texttt{MaxMin}, and $\texttt{StrictNanson}$.
 \end{itemize}
 \end{minipage}\\
 
 \noindent If candidate $1$ changes her ranking to $b \, a \, c$, then all the methods select the winning set $\{b, c\}$.     Suppose that $\Delta$ is either weak or optimistic dominance. Since  $\{b, c\}>_{\mathbf{P}_1}^{\Delta} \{c\}$, $(\mathbf{P}, 1)$ witnesses  sure $\Delta$-dominance manipulability for any nonempty subset of methods from $\{$\texttt{Borda}, \texttt{Copeland}, \texttt{Hare}, \texttt{WeakNanson}, \texttt{Plurality}, \texttt{PluralityWRunoff}$\}$.
Since $\{b, c\}\not >_{\mathbf{P}_1}^{\Delta} \{b, c\}$,  this pointed profile $(\mathbf{P}, 1)$ does not witness sure $\Delta$-dominance manipulability  for any set of voting methods that contains one or more of the following methods: 
\texttt{Baldwin}, \texttt{Coombs}, \texttt{MaxMin}, \texttt{StrictNanson}. 
Of course, there may be other pointed profiles witnessing sure $\Delta$-manipulation for sets of methods containing one of these methods.  

 The above profile $\mathbf{P}$ does not witness sure pessimist manipulation for any set of voting methods.  But the following profile $\mathbf{P}'$ witnesses sure pessimist manipulation:\\
 
 \begin{minipage}{1in}
\begin{center}
\begin{tabular}{c|c|c|c}
$1$ & $2$ & $3$ & $4$ \\\hline
$a$ & $a$ & $b$ & $c$\\
$b$ & $c$ & $a$ & $b$\\
$c$ & $b$ & $c$ & $a$\\
\end{tabular}
\end{center}
\end{minipage}\hspace{.2in}\begin{minipage}{4.5in}The winning sets for the voting methods are: 
\begin{itemize}
\item $\{a\}$  for $\texttt{Borda}$, $\texttt{Copeland}$, $\texttt{Hare}$, \texttt{WeakNanson}, $\texttt{Plurality}$, and $\texttt{PluralityWRunoff}$;
\item  $\{a, b\}$ for \texttt{Baldwin}, $\texttt{Coombs}$, \texttt{MaxMin}, and $\texttt{StrictNanson}$.
\end{itemize}
\end{minipage}\\

\noindent If candidate $1$ changes her ranking to $a  \, c \, b$, then all the methods select the winning set $\{a\}$.      
Since  $\{a\}>_{\mathbf{P}_1'}^{Pes} \{a,b\}$ and $\{a,b\}\not >_{\mathbf{P}_1'}^{Pes} \{a,b\}$, $(\mathbf{P}, 1)$ witnesses  sure pessimist manipulability for any nonempty subset of methods from
$\{\texttt{Baldwin}, \texttt{Coombs}, \texttt{MaxMin}, \texttt{StrictNanson}\}$, 
but no set   containing any of the following methods:
\texttt{Borda}, \texttt{Copeland}, \texttt{Hare}, \texttt{WeakNanson}, \texttt{Plurality}, \texttt{PluralityWRunoff}.

\end{example}

\subsection{Eliminating sure manipulation with a pair of voting methods}

Our first question is whether  uncertainty about the voting method may {\em eliminate} sure manipulation. 

\begin{definition}\label{eliminate-sure-dominance} Let $S$ be a set of voting methods.  We say that $S$   \textit{eliminates sure $\Delta$-manipulation for $(n,m)$} iff $S$ is not susceptible to sure $\Delta$-manipulation for $(n,m)$ but every nonempty $S'\subsetneq S$  is susceptible to sure $\Delta$-manipulation for $(n,m)$. 
\end{definition}

\begin{figure}[h]
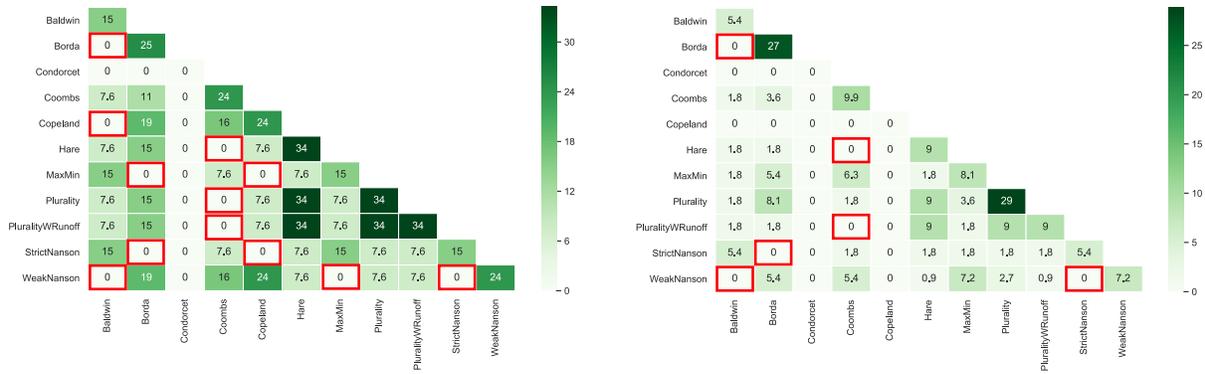

\includegraphics[scale=0.28]{TARKgraphs/sureweakdominancemanipulations34profilesPRINTED.pdf}\qquad
\includegraphics[scale=0.28]{TARKgraphs/sureweakdominancemanipulations37profilesPRINTED.pdf}
\caption{{\small Percentage of profiles witnessing sure weak dominance manipulation for $(3,4)$ (left) and $(3,7)$ (right).}}\label{3437sureweak}
\end{figure}

The numbers in Figure~\ref{3437sureweak} represent the percentage of profiles that witness sure weak dominance manipulation for $(3,4)$ and $(3,7)$. For example, 7.6\% of  $(3,4)$-profiles witness sure weak dominance manipulation for $\{\texttt{Plurality}, \texttt{Copeland}\}$.   The numbers along the diagonal give the percentage of profiles witnessing weak dominance manipulation for a single voting method.\footnote{For 3 candidates, \texttt{Hare} and \texttt{PluralityWRunoff} always pick the same winners.} The numbers highlighted in red identify pairs of voting methods that  eliminate sure weak dominance manipulation according to Definition \ref{eliminate-sure-dominance}.   For instance, 25\% of the  $(3,4)$-profiles witness weak dominance manipulation for \texttt{Borda}, 15\% of the pointed $(3,4)$-profiles witness weak dominances manipulation for \texttt{StrictNanson}, but there are no instances of sure weak dominance manipulation for $\{\texttt{Borda}, \texttt{StrictNanson}\}$. 

To illustrate what happens with larger number of voters, we present in Figure \ref{SureManyVoters} percentages for sure weak dominance manipulation for \texttt{Borda} alone and \texttt{Borda} paired with several other voting methods. The data for $(4,4)$ was obtain by exhaustive search, while the data for $(4,5)$--$(4,70)$ was obtained by generating 10,000 profiles for each $(4,m)$ using the Impartial Culture model.

\begin{wrapfigure}{l}{0.55\textwidth}
\begin{center}
\includegraphics[scale=0.4]{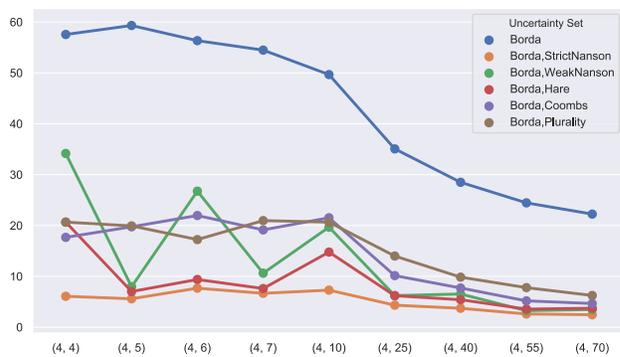}
\end{center}
\caption{{\small Percentage of $(4,m)$-profiles witnessing sure \newline weak dominance manipulation for \texttt{Borda} alone and \newline  \texttt{Borda} paired with several other methods.}}\label{SureManyVoters}
 \end{wrapfigure} 

By exhaustive search from $(3,4)$--$(3,8)$, we find that  $\{$\texttt{Borda}, \texttt{Baldwin}$\}$, $\{$\texttt{Borda}, \texttt{StrictNanson}$\}$, $\{$\texttt{WeakNanson}, \texttt{Baldwin}$\}$, and $\{$\texttt{WeakNanson}, \texttt{StrictNanson}$ \}$ are not susceptible to sure weak dominance manipulation, while each method individually is. This raises the question of whether one can prove that for all $(3,m)$, those sets of methods eliminate sure weak dominance manipulation.~Indeed, we will prove this result. 

\begin{lemma}\label{BordaNansonIndividually} For any $n\geq 3$ and $m\geq 4$,\footnote{The lemma can also be proved by similar reasoning for $n\geq 4$ and $m\geq 3$. However, for $n=3$ and $m=3$, there are no instances of weak dominance manipulation for these methods.} \texttt{Baldwin}, \texttt{Borda}, \texttt{StrictNanson}, and \texttt{WeakNanson} are each susceptible to sure weak dominance manipulation for $(n,m)$.
\end{lemma}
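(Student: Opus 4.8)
The plan is to reduce the statement to ordinary weak dominance manipulability and then build a witness for every $(n,m)$ by padding a small base example. Since each of the four methods is treated in isolation, the relevant uncertainty set $S=\{f\}$ is a singleton, so $(\mathbf{P},i)$ witnesses sure weak dominance manipulability for $\{f\}$ exactly when $f(\mathbf{P}')>_{\mathbf{P}_i}^{weak}f(\mathbf{P})$ for some $\mathbf{P}'$ differing from $\mathbf{P}$ only in $i$'s ranking; that is, sure manipulability collapses to ordinary weak dominance manipulability. For each $f\in\{\texttt{Borda},\texttt{Baldwin},\texttt{StrictNanson},\texttt{WeakNanson}\}$ I would first fix, for $n=3$ and each of $m=4$ and $m=5$, an explicit $(3,m)$-profile together with a manipulating voter $i$ (such base witnesses exist by the exhaustive search already reported, which confirms that each method is individually susceptible). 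I would then lift each base witness to arbitrary $(n,m)$ with $n\geq 3$, $m\geq 4$ by two padding operations, one adding voters and one adding candidates, throughout exploiting that all four methods are computed from \texttt{Borda} scores on the full candidate set or on iteratively restricted subsets.

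To reach every admissible number of voters I would pad by \emph{reversal pairs}: adjoin two voters whose rankings are mutually reversed, leaving $i$'s ranking untouched. On any set $T$ of candidates, a reversal pair restricted to $T$ is again a reversal pair on $T$, so it raises every remaining candidate's \texttt{Borda} score by exactly $|T|-1$ and raises the average \texttt{Borda} score by the same amount. Consequently, at every stage, both the gap between any two candidates' scores and the gap between any candidate's score and the current average are preserved; hence the entire elimination trajectory, and therefore the winning set, of \texttt{Borda}, \texttt{Baldwin}, \texttt{StrictNanson}, and \texttt{WeakNanson} is unchanged, both before and after $i$'s deviation. A reversal pair adds two voters and preserves parity, so starting from the base witnesses at $m=4$ and $m=5$ yields witnesses for all $m\geq 4$.

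To reach every number of candidates I would pad by \emph{bottom dummies}: adjoin candidates $d_1,\dots,d_{n-3}$ ranked $d_1\,d_2\cdots d_{n-3}$ directly below the three original candidates by every voter, so that $i$'s order on the originals, and hence the deviation, is unchanged. Because each original lies in the top three positions for every voter, the relative \texttt{Borda} comparisons among the originals are unaffected on every restricted set, so no dummy is ever among the \texttt{Borda} maximizers; thus for \texttt{Borda} the winning set is exactly the base winning set. For \texttt{Baldwin} the lowest-scoring remaining candidate is always a dummy: at current-set size $t$ a surviving dummy sits at restricted rank $\geq 4$ with score $\leq m(t-4)$, strictly below the worst original's $\geq m(t-3)$, so \texttt{Baldwin} deletes the dummies one by one before touching any original and then reproduces the base run. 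Hence for \texttt{Borda} and \texttt{Baldwin} the padded profile witnesses the same weak dominance manipulation, and together with the voter padding this settles both methods for all $(n,m)$.

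The hard part will be the candidate padding for the two Nanson methods, since they eliminate by comparison to the \emph{average} \texttt{Borda} score rather than to the minimum. A top dummy $d_1$, sitting at rank $4$, has score $m(n-4)$, which exceeds the average $m(n-1)/2$ once $n\geq 7$, so the dummies are not all removed in the first round and several rounds must be tracked; moreover the trajectory can genuinely land on a round of size $t=4$ or $t=5$ with a dummy still present. The delicate point is to guarantee that, while any dummy survives, all three originals stay above the current average (strictly so, for \texttt{WeakNanson}) and hence are never eliminated prematurely, while every dummy eventually falls below it. On a current set of size $t$ the worst original has score $\geq m(t-3)$ against average $m(t-1)/2$, and $m(t-3)>m(t-1)/2$ precisely when $t\geq 6$; so the claim is automatic for every round with at least six candidates remaining, and the only rounds needing hand-checking are the endgame sizes $t\in\{4,5\}$, where an original at rank $3$ can meet or fall below the average (fatal for \texttt{WeakNanson}, and at $t=4$ even for \texttt{StrictNanson}). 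I would therefore choose the $n=3$ base profiles so that, when the last one or two dummies remain, all three originals have restricted \texttt{Borda} scores strictly above the average, an inequality checkable directly for the fixed base witnesses; once the dummies are gone the base Nanson run resumes, yielding the same winning sets before and after $i$'s deviation. Verifying this endgame inequality for the chosen base profiles is the crux of the argument, while the remaining bookkeeping is routine.
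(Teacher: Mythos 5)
Your proposal follows essentially the same route as the paper's proof: reduce sure manipulation for a singleton $\{f\}$ to ordinary weak dominance manipulation, verify base witnesses at $(3,4)$ and $(3,5)$ by computer search, pad the number of voters with a reversed pair of ballots (the paper uses exactly $a\,b\,c$ and $c\,b\,a$), and pad the number of candidates with dummies placed at the bottom of every ballot. The voter-padding analysis is correct and matches the paper's. The one substantive divergence is in the candidate padding, and it is where your version runs into trouble that the paper's formulation is designed to avoid: the paper adjoins \emph{one} fresh bottom candidate at a time ($\mathbf{P}\mapsto\mathbf{P}^+$) and iterates, so that the new candidate has \texttt{Borda} score $0$, is the unique strict minimum (for \texttt{Baldwin}) and is strictly below average (for both Nanson methods), and hence is discarded in the very first elimination round; the claim to check is then a single-step invariant that the winning set is unchanged by one such addition. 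By adjoining all $n-3$ dummies as a block, you force, for \texttt{StrictNanson} and \texttt{WeakNanson}, a multi-round analysis of which dummies survive each round (since a dummy at rank $4$ can sit above the average once $n\geq 7$) together with the endgame at current-set sizes $t\in\{4,5\}$, where an original can meet or fall below the average while dummies are still present. You correctly identify this endgame inequality as the crux, but you leave it unverified, so as written the proposal is incomplete exactly at the step the lemma needs for the Nanson methods; either carry out that check for the specific base profiles, or switch to the one-dummy-at-a-time induction (noting that even there the first-round average threshold shifts, so the preservation of the Nanson winning sets under a single bottom-dummy addition still deserves a short argument rather than a bare assertion).
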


\begin{proof} Given a $(3,m)$-profile $\mathbf{P}$, let $\mathbf{P}\uplus \mathbf{P}_2$ be the $(3,m+2)$-profile that results from adding to $\mathbf{P}$ two fresh voters with rankings $a\, b \, c$ and $c \, b \, a$, respectively. The rankings of candidates by \texttt{Borda} score in $\mathbf{P}$ and $\mathbf{P}\uplus \mathbf{P}_2$ are the same; the sets of candidates with the lowest \texttt{Borda} scores in $\mathbf{P}$ and $\mathbf{P}\uplus \mathbf{P}_2$ are the same; and the sets of candidates with less than average (resp.~less than or equal to average) \texttt{Borda} scores in $\mathbf{P}$ and $\mathbf{P}\uplus \mathbf{P}_2$ are the same. Thus, the set of \texttt{Borda} (resp.~\texttt{Baldwin}, \texttt{StrictNanson}, \texttt{WeakNanson}) winners does not change from $\mathbf{P}$ and $\mathbf{P}\uplus \mathbf{P}_2$. It follows that if $\mathbf{P}$ witnesses $\Delta$-manipulation for \texttt{Borda} (resp.~\texttt{Baldwin}, \texttt{StrictNanson}, \texttt{WeakNanson}) by transitioning to $\mathbf{P}'$, then $\mathbf{P}\uplus \mathbf{P}_2$  witnesses $\Delta$-manipulation for \texttt{Borda} (resp.~\texttt{Baldwin}, \texttt{StrictNanson},  \texttt{WeakNanson}) by transitioning to $\mathbf{P}'\uplus \mathbf{P}_2$. Thus, to prove that one of these voting methods is susceptible to $\Delta$-dominance manipulation for $(3,m)$ for any $m\geq 4$, it suffices to show this for $(3,4)$ and $(3,5)$, as susceptibility for all other pairs $(3,m)$ then follows by the preceding observation about $\mathbf{P}\uplus \mathbf{P}_2$. As we verified using our Python script, there are indeed $(3,4)$ and $(3,5)$ profiles witnessing weak dominance manipulability for all four methods individually.

For any $(n,m)$-profile $\mathbf{P}$, let $\mathbf{P}^+$ be the $(n+1,m)$-profile that results from adding to $\mathbf{P}$ a fresh candidate at the bottom of every voter's ranking. The set of \texttt{Borda} (resp.~\texttt{Baldwin}, \texttt{StrictNanson}, \texttt{WeakNanson}) winners does not change from $\mathbf{P}$ to $\mathbf{P}^+$. Thus, to prove that one of these methods is susceptible to $\Delta$-dominance manipulation for $(n,m)$ for any $n\geq 3$, it suffices to show this for $(3,m)$, as above.\end{proof}

The following main theorems, proved in the Appendix, show that uncertainty about the voting method can  entirely eliminate sure weak dominance manipulation.

\begin{theorem}\label{MainThm} For any $m\geq 4$, the sets $\{\texttt{Borda}, \texttt{Baldwin}\}$ and $\{\texttt{Borda}, \texttt{StrictNanson}\}$ eliminate sure weak dominance manipulation for $(3,m)$.
\end{theorem}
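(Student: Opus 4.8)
The plan is to verify the two conditions in Definition~\ref{eliminate-sure-dominance}. The easy half---that every nonempty proper subset is susceptible---is immediate, since for each pair the only such subsets are the singletons $\{\texttt{Borda}\}$, $\{\texttt{Baldwin}\}$, $\{\texttt{StrictNanson}\}$, which are susceptible to sure weak dominance manipulation for $(3,m)$, $m\geq 4$, by Lemma~\ref{BordaNansonIndividually}. So all the content lies in showing that neither pair is susceptible, i.e.\ that no $(3,m)$-profile admits a single deviation by one voter that simultaneously weak-dominance-improves both \texttt{Borda} and \texttt{Baldwin} (resp.\ \texttt{StrictNanson}). I note that the two reductions in the proof of Lemma~\ref{BordaNansonIndividually} only propagate susceptibility \emph{upward}, so they give no leverage here; a uniform-in-$m$ argument is needed, which I carry out directly for $n=3$.

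Suppose toward a contradiction that $(\mathbf{P},i)$ witnesses sure weak dominance manipulability for the pair by transitioning to $\mathbf{P}'$, and let $i$'s true ballot be $\alpha\beta\gamma$, writing $b_x$ for the \texttt{Borda} score of $x$ in $\mathbf{P}$. First I would determine which of $i$'s five nontrivial deviations can even weak-dominance-improve \texttt{Borda}. Recording each deviation's effect as a score-change vector $(\Delta_\alpha,\Delta_\beta,\Delta_\gamma)$, one sees that $\Delta_\alpha\leq 0$ and $\Delta_\gamma\geq 0$ always: $i$ can neither raise her favorite's score nor lower her least-favorite's. Combining this with the requirement $\mathrm{min}(\texttt{Borda}(\mathbf{P}'),\mathbf{P}_i)\,\mathbf{R}_i\,\mathrm{max}(\texttt{Borda}(\mathbf{P}),\mathbf{P}_i)$ plus strictness, and using that for three candidates the scores sum to $3m$ (so any top tie has a score gap that is a positive multiple of $3$), I would show that \texttt{Borda} is improvable for $i$ in exactly four configurations, each via a unique deviation: from $\{\gamma\}$ (when $b_\gamma-b_\beta=1$) to $\{\beta,\gamma\}$ via $\beta\alpha\gamma$; from $\{\beta,\gamma\}$ to $\{\beta\}$ via $\beta\alpha\gamma$; from $\{\alpha,\beta\}$ to $\{\alpha\}$ via $\alpha\gamma\beta$; and from $\{\beta\}$ (when $b_\beta-b_\alpha=1$ and $b_\alpha$ is large enough) to $\{\alpha,\beta\}$ via $\alpha\gamma\beta$. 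The remaining winning sets $\{\alpha\}$, $\{\alpha,\gamma\}$, and $\{\alpha,\beta,\gamma\}$ admit no improvement at all.

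The crux is then to observe that these very deviations leave \texttt{Baldwin} and \texttt{StrictNanson} fixed. For three candidates both methods first delete the unique minimum-\texttt{Borda} candidate---which in each of the four configurations is also the \emph{unique} below-average candidate, so the two methods agree on this step---and then return the majority winner(s) between the two survivors. The key point is that $\alpha\gamma\beta$ only swaps $\beta$ and $\gamma$ in $i$'s ballot, hence leaves the $\alpha$-versus-$\beta$ majority comparison intact, whereas $\beta\alpha\gamma$ only swaps $\alpha$ and $\beta$, hence leaves the $\beta$-versus-$\gamma$ comparison intact. In the two $\alpha\gamma\beta$-configurations the survivors are exactly $\{\alpha,\beta\}$, and in the two $\beta\alpha\gamma$-configurations they are exactly $\{\beta,\gamma\}$; moreover one checks that the eliminated candidate is still the unique minimum after the deviation (for $\beta\alpha\gamma$ its score only decreases; for $\alpha\gamma\beta$ its score rises by one, but the margin forced by the sum-to-$3m$ constraint keeps it lowest). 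Hence the \texttt{Baldwin} (resp.\ \texttt{StrictNanson}) winning set is \emph{literally unchanged} by $\mathbf{P}\to\mathbf{P}'$, and since $X\not>_{\mathbf{P}_i}^{weak}X$ for every $X$, it cannot strictly improve---contradicting that $(\mathbf{P},i)$ witnesses sure weak dominance manipulability for the pair.

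The main obstacle is conceptual before it is computational: \texttt{Borda} responds only to aggregate scores while \texttt{Baldwin} and \texttt{StrictNanson} respond to head-to-head comparisons among survivors, so there is no a priori reason a single deviation cannot improve both. The argument succeeds precisely because each score-improving deviation turns out to be ``orthogonal'' to the pairwise comparison that the elimination methods actually use---the example in the text, where \texttt{Borda} improves $\{c\}$ to $\{b,c\}$ while \texttt{Baldwin} stays $\{b,c\}$, is the prototype of this obstruction. The laborious part, to be completed in the appendix, is the bookkeeping behind the second and third paragraphs: verifying via the sum-to-$3m$ divisibility constraints that the four listed configurations are the \emph{only} improvable ones, and checking in each that no boundary tie lets $\gamma$ slip into the new \texttt{Borda} winning set and that both the eliminated candidate and the surviving pairwise comparison are genuinely preserved.
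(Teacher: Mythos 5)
Your proposal is correct and takes essentially the same route as the paper's own proof: both arguments reduce to showing that any sure-weak-dominance-improving \texttt{Borda} deviation is an adjacent transposition that preserves the identity of the unique below-average (equivalently, unique minimum) \texttt{Borda} candidate and the majority comparison between the two surviving candidates, so that \texttt{Baldwin} and \texttt{StrictNanson} return literally the same winning set in $\mathbf{P}$ and $\mathbf{P}'$. The only difference is organizational---you enumerate the four improvable (old winning set, new winning set, deviation) configurations explicitly, whereas the paper establishes the same facts through four structural claims---and the bookkeeping you defer (the sum-to-$3m$ divisibility constraints and the boundary case where the deviation from $\{\beta\}$ yields a three-way tie) does check out.
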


\begin{theorem}\label{MainThm2} For any $m\geq 4$, the sets $\{\texttt{WeakNanson}, \texttt{Baldwin}\}$ and $\{\texttt{WeakNanson}, \texttt{StrictNanson}\}$ eliminate sure weak dominance manipulation for $(3,m)$.
\end{theorem}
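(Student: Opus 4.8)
By Definition~\ref{eliminate-sure-dominance}, to show that each of $\{\texttt{WeakNanson},\texttt{Baldwin}\}$ and $\{\texttt{WeakNanson},\texttt{StrictNanson}\}$ eliminates sure weak dominance manipulation for $(3,m)$ I must establish two things: (i) every nonempty proper subset is susceptible, and (ii) the pair itself is not susceptible. Since each pair has exactly its two singletons as nonempty proper subsets, (i) is immediate from Lemma~\ref{BordaNansonIndividually}, which gives susceptibility of $\texttt{WeakNanson}$, $\texttt{StrictNanson}$, and $\texttt{Baldwin}$ individually for all $m\geq 4$. So the entire content lies in (ii): for every $(3,m)$-profile $\mathbf{P}$, voter $i$, and profile $\mathbf{P}'$ differing from $\mathbf{P}$ only in $i$'s ranking, it is not the case that both members of the pair improve from $\mathbf{P}$ to $\mathbf{P}'$ in the sense of $>_{\mathbf{P}_i}^{weak}$.

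The plan is to reduce all three methods to a common two-stage description in the three-candidate case. Writing $m$ for the average Borda score (the total $3m$ is split over three candidates), one checks that each of $\texttt{WeakNanson}$, $\texttt{StrictNanson}$, and $\texttt{Baldwin}$ first eliminates a set of candidates fixed by their Borda scores---$\texttt{WeakNanson}$ those with score $\leq m$, $\texttt{StrictNanson}$ those with score $<m$, $\texttt{Baldwin}$ those of minimum score---and then, if two candidates survive, outputs the pairwise majority winner among them (if one survives, it wins; if all are tied, all three win). Comparing the first stages, the three methods produce identical winning sets except in the boundary configurations where some candidate has Borda score exactly $m$. I will also record the weak dominance improvement table: since $X>_{\mathbf{P}_i}^{weak}Y$ forces $\mathrm{min}(X,\mathbf{P}_i)\mathrel{\mathbf{R}_i}\mathrm{max}(Y,\mathbf{P}_i)$, the worst candidate (for $i$) in the new winning set must be ranked at least as high as the best candidate in the old one, which leaves only a short list of possible improving transitions for each sincere outcome.

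The crux is the claim, from which (ii) follows at once for both pairs, that any single-voter ballot change which strictly improves $\texttt{WeakNanson}$ in the $>_{\mathbf{P}_i}^{weak}$ sense leaves the $\texttt{StrictNanson}$ and $\texttt{Baldwin}$ winning sets unchanged, and hence does not strictly improve either of them. The guiding intuition is that $\texttt{WeakNanson}$'s output responds to the exact-average boundary---an improvement for $i$ arises when her deviation alters the set $\{x\mid score(\mathbf{P},x)>m\}$ or the pairwise comparison among its members---whereas $\texttt{StrictNanson}$ (eliminating only strictly-below-average candidates) and $\texttt{Baldwin}$ (eliminating only the minimum) treat a candidate sitting exactly at $m$ the same as one above $m$, so their first-stage elimination, and the surviving pairwise comparison, are untouched by the very move that helps $\texttt{WeakNanson}$. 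Two structural facts keep this under control: voter $i$ can neither raise the score of a candidate she already ranks first nor lower that of one she ranks last, which blocks many would-be improvements; and the pairwise tally between two candidates shifts under $i$'s deviation only when she reverses their relative order, and then only in the direction of her reported ranking, so she cannot flip a majority comparison toward a candidate she already supports. Carrying out the case analysis---organized by $i$'s sincere ranking $\alpha\beta\gamma$, by which candidate occupies the boundary, and by the pairwise relations among survivors---should verify the claim.

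I expect the main obstacle to be the completeness and bookkeeping of this case analysis rather than a single hard idea: one must enumerate every way $\texttt{WeakNanson}$ can be improved and confirm in each that the partner method's outcome does not move. The delicate sub-cases are the tie configurations (two candidates tied above average, two tied below, or all equal), whose occurrence depends on the parity of $m$; deviations whose order-reversal happens to involve exactly the pair the partner method compares pairwise; and the potentially dangerous moves in which a candidate crosses all the way from above $m$ to strictly below $m$ (which \emph{would} change $\texttt{StrictNanson}$'s elimination set), where one must check that such a move requires $i$ to demote a candidate she ranks high and therefore cannot be improving. A useful simplification is that both the two-stage characterization and the improvement table are independent of $m$, so the finitely many configurations of roles and pairwise relations cover all $m$ at once; no induction on the number of voters is needed here (in contrast to Lemma~\ref{BordaNansonIndividually}), and non-susceptibility in fact holds for every $m$, the hypothesis $m\geq 4$ entering only through part (i). Finally, in sub-cases where the eliminated set leaves a single survivor that is the strict $\texttt{Borda}$ winner, one can shortcut the verification by appealing to the non-manipulability of three-candidate $\texttt{Borda}$ cited above.
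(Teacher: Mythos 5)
Your proposal follows essentially the same route as the paper: part (i) from Lemma~\ref{BordaNansonIndividually}, and for part (ii) the same reduction of all three methods to a Borda-score elimination stage followed by a majority comparison among survivors, with an exhaustive case analysis over score and majority configurations. The paper's appendix is precisely the execution of the case analysis you defer---eight cases organized by which candidates lie at, above, or below the average score $m$ and by the pairwise majority relations---and it confirms your crux claim: \texttt{WeakNanson} can be improved only in the boundary configuration where $i$'s middle candidate sits exactly at the average ($B(\alpha)<B(\beta)=m<B(\gamma)$), and there the deviation leaves both the \texttt{StrictNanson} and \texttt{Baldwin} winning sets unchanged.
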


By contrast, already for $(3,5)$, $\{\texttt{Baldwin},\texttt{Borda}, \texttt{StrictNanson}, \texttt{WeakNanson}\}$ is susceptible to sure \textit{optimistic} dominance and sure \textit{pessimistic} dominance. We give an example for optimistic dominance, leaving pessimistic dominance as an exercise.

\begin{example} Let $\mathbf{P}$ be the following $(3,5)$ profile:\\

\begin{minipage}{1in}
\begin{center}
\begin{tabular}{c|c|c|c|c}
$1$ &  $2$ & $3$  & $4$ & $5$\\\hline
$a$ & $a$ & $b$ & $c$ & $c$\\
$b$ & $b$ & $a$ & $b$ & $b$\\
$c$ & $c$ & $c$ & $a$ & $a$\\
\end{tabular}
\end{center}
\end{minipage}\hspace{.45in}\begin{minipage}{4.6in}
\texttt{Baldwin}, \texttt{Borda}, \texttt{StrictNanson}, and \texttt{WeakNanson} choose $\{b\}$.  If voter 1 changes her ranking to $a\ c\ b$, then $\{a,b,c\}$ is the winning set for all four methods.  Since $\{a,b,c\} >_{\mathbf{P}_1}^{Opt} \{b\}$,  $(\mathbf{P},1)$ witnesses sure optimist dominance manipulability for $\{\texttt{Baldwin},\texttt{Borda}, \texttt{StrictNanson},\texttt{WeakNanson}\}$. 
\end{minipage}
\end{example}

For another contrast to Theorems \ref{MainThm}-\ref{MainThm2}, when we increase to 4 candidates, $\{$\texttt{Baldwin}, \texttt{Borda}, \texttt{StrictNanson}, \texttt{WeakNanson}$\}$ is susceptible to sure weak dominance manipulation.

\begin{example}\label{43example} Consider the following $(4,3)$-profile $\mathbf{P}$:\\

\begin{minipage}{.5in}
\begin{center}
\begin{tabular}{c|c|c}
$1$ &  $2$ & $3$\\\hline
$a$  & $b$ & $c$\\
$b$  & $d$ & $a$\\
$c$  & $c$ & $b$\\
$d$  & $a$ & $d$ \\
\end{tabular}
\end{center}
\end{minipage}\hspace{.55in}\begin{minipage}{5in}
The \texttt{Borda} winning set is $\{b\}$, while the \texttt{Baldwin}, \texttt{StrictNanson}, and \texttt{WeakNanson} winning sets are all $\{a,b,c\}$. If voter 1 changes her ranking to $a\,d\,b\,c$, then the \texttt{Borda} winning set is $\{a,b\}$, while the \texttt{Baldwin}, \texttt{StrictNanson}, and \texttt{WeakNanson} winning sets are $\{a\}$. Since $\{a,b\} >_{\mathbf{P}_1}^{weak} \{b\}$ and $\{a\} >_{\mathbf{P}_1}^{weak} \{a,b,c\}$, $(\mathbf{P},1)$ witnesses sure weak dominance manipulability for $\{\texttt{Baldwin}, \texttt{Borda}, \texttt{StrictNanson},\texttt{WeakNanson}\}$.
\end{minipage}
\end{example}

\subsection{The failure of the Duggan-Schwartz theorem for sets of voting methods}

A natural analogue of the Duggan-Schwartz theorem \cite{DugganSchwartz:2000} for sure manipulation of sets of voting methods would state that for any $(n,m)$ with $n\geq 3$ and set $S$ of methods, each of which is non-imposed and has no nominator, $S$ is susceptible to sure optimistic or pessimistic dominance manipulation for $(n,m)$.  However, this statement is false, as shown by the following example verified by our Python script.

\begin{example} These sets of methods (which are non-imposed and have no nominator) eliminate sure optimistic and pessimistic dominance manipulation for $(3,6)$:
$\{$\texttt{Baldwin}, \texttt{Condorcet}$\}$, 
$\{$\texttt{Condorcet}, \texttt{Copeland}$\}$, 
$\{$\texttt{Condorcet}, \texttt{MaxMin}$\}$,
$\{$\texttt{Condorcet}, \texttt{StrictNanson}$\}$, and
$\{$\texttt{Condorcet}, \texttt{WeakNanson}$\}$. 
\end{example}

\subsection{Cases where three methods are needed for elimination} 

So far we have only considered sets of two voting methods. But in some cases three voting methods are needed to eliminate sure manipulation. We saw in Example \ref{43example} that $\{\texttt{Borda}, \texttt{Baldwin}\}$ is susceptible to sure weak dominance manipulation for $(4,3)$. However, adding $\texttt{Coombs}$ to $\{\texttt{Borda}, \texttt{Baldwin}\}$ eliminates sure weak dominance manipulation, as verified by our Python script (which also finds profiles witnessing sure weak dominance manipulability for $\{\texttt{Borda}, \texttt{Coombs}\}$ and $\{\texttt{Coombs}, \texttt{Baldwin}\}$).

\begin{fact} $\{\texttt{Borda}, \texttt{Coombs}, \texttt{Baldwin}\}$  eliminates sure weak dominance manipulation for $(4,3)$. 
\end{fact}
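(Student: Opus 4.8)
The plan is to verify the two clauses of Definition~\ref{eliminate-sure-dominance} for $S=\{\texttt{Borda}, \texttt{Coombs}, \texttt{Baldwin}\}$ with $\Delta$ the weak dominance notion and $(n,m)=(4,3)$: first, that $S$ itself is not susceptible to sure weak dominance manipulation for $(4,3)$; and second, that every nonempty proper subset of $S$ is susceptible. I would prove these two clauses by quite different means, and the split is worth making explicit because the second clause can be reduced cheaply while the first is genuinely combinatorial.

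The organizing observation for the second clause is the \emph{monotonicity} of sure manipulability under shrinking the method set. The condition for a pointed profile $(\mathbf{P},i)$ to witness sure $\Delta$-manipulability for a set via a transition to $\mathbf{P}'$ is the universally quantified statement $\forall f\in S:\ f(\mathbf{P}')>_{\mathbf{P}_i}^\Delta f(\mathbf{P})$; being a universal over $S$, this statement immediately implies the corresponding statement for any nonempty $S'\subseteq S$ \emph{with the very same transition}. Hence susceptibility of a set entails susceptibility of each of its nonempty subsets. Consequently, to establish the second clause I would only need to exhibit witnesses for the three maximal proper subsets $\{\texttt{Borda}, \texttt{Baldwin}\}$, $\{\texttt{Borda}, \texttt{Coombs}\}$, and $\{\texttt{Coombs}, \texttt{Baldwin}\}$; susceptibility of the three singletons then follows for free. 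A witness for $\{\texttt{Borda}, \texttt{Baldwin}\}$ is already supplied by Example~\ref{43example}, and explicit $(4,3)$-profiles witnessing sure weak dominance manipulability for the remaining two pairs are produced by the Python script, which I would record as concrete profiles for completeness.

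For the first clause, the argument is by exhaustive search. The space of $(4,3)$-profiles is finite, of size $(4!)^3=13824$, so one can iterate over every profile $\mathbf{P}$, every voter $i$, and every alternative ranking $\mathbf{P}'_i\in L(C)$ for that voter, compute the $\texttt{Borda}$, $\texttt{Coombs}$, and $\texttt{Baldwin}$ winning sets before and after the deviation, and check whether the weak dominance improvement $f(\mathbf{P}')>_{\mathbf{P}_i}^{weak} f(\mathbf{P})$ holds \emph{simultaneously} for all three methods. The script confirms that no such simultaneous improvement occurs, which establishes that $S$ is not susceptible.

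The hard part will be that this non-susceptibility does not appear to admit a short structural proof. The three methods interact heterogeneously---$\texttt{Borda}$ through a single global positional score, $\texttt{Baldwin}$ through iterated elimination of lowest-Borda candidates, and $\texttt{Coombs}$ through iterated elimination of the most last-ranked candidates---and I see no evident invariant (comparable to the $\mathbf{P}\uplus\mathbf{P}_2$ and $\mathbf{P}^+$ reductions exploited in Lemma~\ref{BordaNansonIndividually}) that would force the three winning sets to obstruct one another across every deviation. For this reason I expect the verification of the first clause to rest on the finiteness of the $(4,3)$ search space rather than on an analytic argument, which is exactly why the result is stated as a computational \emph{Fact}.
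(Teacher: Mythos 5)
Your proposal is correct and matches the paper's approach: the paper likewise establishes this Fact by exhaustive computational verification with its Python script, citing Example~\ref{43example} for the susceptibility of $\{\texttt{Borda},\texttt{Baldwin}\}$ and the script's output for the other two pairs (with the singletons handled by the same subset-monotonicity of the universally quantified sure-manipulation condition that you make explicit). Your spelling out of that monotonicity and of the $(4!)^3$ search space is a clean articulation of what the paper leaves implicit, but it is not a different route.
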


 \subsection{$(n,m)$ for which sure manipulation cannot be eliminated}
 
 So far we have focused on eliminating sure manipulation using uncertainty about the voting method. However, as the following result shows, eliminating sure manipulation is not always possible.
 
 \begin{proposition} For every $(n,m)$, there are $n'>n$ and $m'>m$ such that every nonempty subset of $\mathsf{Methods}\setminus\{\texttt{Condorcet}\}$ is susceptible to sure weak dominance manipulation for $(n',m')$.
 \end{proposition}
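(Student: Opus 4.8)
The plan is to reduce the statement to the construction of a \emph{single} profile that every method in $T:=\mathsf{Methods}\setminus\{\texttt{Condorcet}\}$ is manipulated at, and then to blow that profile up by padding. The key observation is that, for sure manipulation, a witness for a larger set of methods is automatically a witness for every nonempty subset: if a pointed profile $(\mathbf{P},i)$ witnesses sure weak dominance manipulability for $T$ by transitioning to $\mathbf{P}'$, then the defining condition $\forall f\in T:\ f(\mathbf{P}')>_{\mathbf{P}_i}^{weak} f(\mathbf{P})$ restricts to every nonempty $S\subseteq T$, so the same $(\mathbf{P},i)$ and $\mathbf{P}'$ witness it for $S$. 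Hence it suffices to find, for some $n'>n$ and $m'>m$, a single $(n',m')$-profile witnessing sure weak dominance manipulability for all of $T$ at once. Note that the ten methods need not agree on their winning sets; we only need each one individually to be improved by $i$'s deviation.

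To reach arbitrarily large $n'$ and $m'$ I would start from a small \emph{base witness} and apply two padding operations, each of which preserves the winning set of every method in $T$ and hence the manipulation witness, since $i$'s ranking of the original candidates is left untouched. \emph{Candidate padding}: append a fresh candidate at the bottom of every voter's ranking. This raises each original candidate's \texttt{Borda} score by a constant, leaves all margins $Net_\mathbf{P}(\cdot,\cdot)$ among original candidates unchanged, and does not change the number of voters; the appended candidate is a Borda-minimal Condorcet loser with no first-place and all last-place votes, so it is discarded at the first stage of every iterative method and ignored by the rest. \emph{Voter padding}: add one copy of each of the $n'!$ linear orders on the current candidate set. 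By symmetry this adds a constant to every positional score, leaves every margin unchanged, and adds \emph{equal} increments to the first-place, last-place, and restricted-\texttt{Borda} tallies of the remaining candidates at every stage of each iterative method; so the relative orders that drive \texttt{Borda}, \texttt{Plurality}, \texttt{Copeland}, \texttt{MaxMin}, \texttt{Baldwin}, \texttt{StrictNanson}, \texttt{WeakNanson}, and \texttt{PluralityWRunoff} are all preserved, and the \texttt{PluralityWRunoff} runoff is just a (preserved) pairwise majority. Iterating the two operations pushes $n'$ and $m'$ above any prescribed $n$ and $m$.

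The one delicate point in the padding concerns the majority-winner stopping rule of \texttt{Hare} and \texttt{Coombs}: the uniform block dilutes majorities, so a candidate that halts the process before padding may fail to halt it afterward, and in \texttt{Coombs} a majority candidate among three can even be the one with the most last-place votes. I would neutralize this by taking the base witness to have exactly three candidates with \emph{no} candidate ranked first by a strict majority in either $\mathbf{P}$ or $\mathbf{P}'$. Then in the three-candidate core both methods eliminate one candidate and decide between the surviving two, where ``majority winner'' coincides with the pairwise-majority winner; since margins are preserved by both paddings, \texttt{Hare} and \texttt{Coombs} return the same winner before and after padding. (Candidate padding only strips off the appended bottom candidates before returning to this core and leaves the voter count fixed, so it raises no such issue.)

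What remains---and what I expect to be the main obstacle---is exhibiting the base witness: a $(3,m_0)$-profile $\mathbf{P}$ with a voter $i$ and a deviation $\mathbf{P}'$ such that (a) no candidate is a majority winner in $\mathbf{P}$ or $\mathbf{P}'$, (b) every candidate receives at least one first-place vote, and (c) $f(\mathbf{P}')>_{\mathbf{P}_i}^{weak} f(\mathbf{P})$ holds \emph{simultaneously} for all ten methods in $T$. This is genuinely restrictive, because the Condorcet-based methods (\texttt{Copeland}, \texttt{MaxMin}, \texttt{Baldwin}, the Nanson rules) track the majority relation while the positional and runoff rules do not, so one compromising deviation by $i$ must improve all of them together; the natural template is a profile in which a candidate ranked low by $i$ wins under every method and $i$ promotes a middle candidate past it. I would locate such a profile by the same exhaustive search over small $(3,m_0)$ used elsewhere in the paper, and then invoke the monotonicity and padding steps above. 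Finally, the exclusion of \texttt{Condorcet} is exactly what one expects: its coarse output---a lone Condorcet winner or all of $C$---makes it resistant to weak-dominance manipulation by a single voter, who already ranks her target candidate maximally, so any $S$ containing \texttt{Condorcet} would in general fail to be manipulated by this construction.
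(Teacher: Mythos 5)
Your overall strategy---find one base profile that simultaneously witnesses sure weak dominance manipulation for all ten methods, observe that such a witness restricts to every nonempty subset, and then grow $(n,m)$ by appending bottom candidates and adding a uniform block of all rankings---is exactly the paper's strategy (the paper uses a $(4,4)$-profile and the $4!=24$-voter block). But your plan has a fatal flaw at precisely the point you flag as ``the main obstacle.'' To neutralize the majority-dilution problem for \texttt{Hare} and \texttt{Coombs}, you insist on a \emph{three-candidate} base witness, i.e., a $(3,m_0)$-profile $(\mathbf{P},i)$ with a single deviation improving the outcome under all ten methods at once. No such profile exists: such a witness would in particular witness sure weak dominance manipulation for the subset $\{\texttt{Borda},\texttt{Baldwin}\}$, and Theorem \ref{MainThm} shows that $\{\texttt{Borda},\texttt{Baldwin}\}$ is \emph{not} susceptible to sure weak dominance manipulation for $(3,m)$ for any $m\geq 4$ (and the footnote to Lemma \ref{BordaNansonIndividually} rules out $m=3$). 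So the exhaustive search you defer to would come up empty, and the construction cannot be completed as described.

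The base witness must therefore have at least four candidates; the paper exhibits a concrete $(4,4)$-profile in which every method in $\mathsf{Methods}\setminus\{\texttt{Condorcet}\}$ selects $\{c\}$ and voter $1$'s switch to $b\,d\,c\,a$ moves every method to $\{b,c\}$. With four candidates your worry about the uniform block diluting a majority is still legitimate in general, but it is handled by checking the specific profile rather than by a structural no-majority-at-three-candidates argument: in the paper's profile (before and after the deviation, and at every elimination stage) the \texttt{Hare}/\texttt{Coombs} winners emerge either via the ``all remaining candidates tied'' clause or at the one- or two-candidate stage, and each of these is preserved when every candidate's tally and the majority threshold shift by the amounts induced by the uniform block. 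So the correct repair of your argument is: keep the two padding operations, replace the three-candidate base with the paper's $(4,4)$-profile (or another explicit $\geq 4$-candidate witness), and verify preservation of the \texttt{Hare} and \texttt{Coombs} computations for that profile directly.
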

 
 \begin{proof} First, we claim that if there is one $(n,m)$-profile that witnesses sure weak dominance manipulability for every nonempty subset of $\mathsf{Methods}\setminus\{\texttt{Condorcet}\}$, then for all $k\in\mathbb{N}$, there is an $(n,m+24k)$-profile that does so. For any profile $\mathbf{P}$, let $\mathbf{P}\uplus\mathbf{P}_{24}$ be the result of adding 24 new voters to $\mathbf{P}$, one with each of the possible 24 rankings of $\{a,b,c,d\}$. It is easy to see that for any $f\in\mathsf{Methods}$, $f(\mathbf{P})=f(\mathbf{P}\uplus\mathbf{P}_{24})$. It follows that if $\mathbf{P}$ witnesses sure $\Delta$-dominance manipulability for $S$ by transitioning to $\mathbf{P}'$, then so does $\mathbf{P}\uplus\mathbf{P}_{24}$ by transitioning to $\mathbf{P}'\uplus\mathbf{P}_{24}$.   In addition, using the construction from $\mathbf{P}$ to $\mathbf{P}^+$ in the proof of Lemma \ref{BordaNansonIndividually}, we can increase the number $n$ of candidates, since for any $f\in\mathsf{Methods}$, $f(\mathbf{P})=f(\mathbf{P}^+)$. Now consider the following $(4,4)$-profile~$\mathbf{P}$:\\
 
 \begin{minipage}{1in}
\begin{center}
\begin{tabular}{c|c|c|c}
$1$ & $2$ & $3$ & $4$    \\\hline
$a$ & $b$ & $c$ & $c$ \\
$b$ & $d$ & $a$ & $a$   \\
$c$ & $c$ & $b$ & $b$  \\
$d$ & $a$ & $d$ & $d$  \\
\end{tabular}
\end{center}
\end{minipage}\hspace{.2in}\begin{minipage}{4.85in}
Every method in $\mathsf{Methods}\setminus\{\texttt{Condorcet}\}$ chooses $\{c\}$ as the set of winners. If voter 1 changes to $b\,d\, c\, a$, then the winning set for 
all  methods in ${\mathsf{Method}\setminus\{\texttt{Condorcet}\}}$ becomes $\{b,c\}$. Since $\{b,c\}>_{\mathbf{P}_1}^{weak}\{c\}$, $(\mathbf{P},1)$ witnesses sure weak dominance manipulability for any nonempty subset of $\mathsf{Method}\setminus\{\texttt{Condorcet}\}$.\qedhere
\end{minipage}\end{proof}

\subsection{Reduction without elimination}\label{ReductionSection}

Even when eliminating sure manipulation is not possible, one may hope to reduce it. For example, in Figures \ref{3437sureweak} and \ref{SureManyVoters}, there are a number of cases in which a pair of methods does not eliminate sure weak dominance manipulation but does reduce the number of profiles witnessing sure weak dominance manipulation relative to either method individually. This motivates the following notions.

\begin{definition}\label{LessSure} For any sets $S$ and $S'$ of voting methods, $S$ is \textit{less susceptible to sure} \textit{$\Delta$-manipulation than $S'$ for $(n,m)$-profiles} (resp.~\textit{pointed $(n,m)$-profiles}) iff there are fewer $(n,m)$-profiles (resp.~pointed $(n,m)$-profiles) witnessing sure $\Delta$-manipulation for $S$ than there are for $S'$.
\end{definition}

\begin{definition}\label{ImproveOnAll} A set $S$ of methods \textit{improves on all its subsets with respect to sure $\Delta$-manipulation for $(n,m)$} iff $S$ is less susceptible to sure $\Delta$-manipulation for $(n,m)$-profiles than any nonempty $S'\subsetneq S$.
\end{definition}

\noindent Such improvement is especially pronounced with optimistic and pessimistic dominance, as in Figure \ref{OptPessFig}.

\begin{figure}[h]
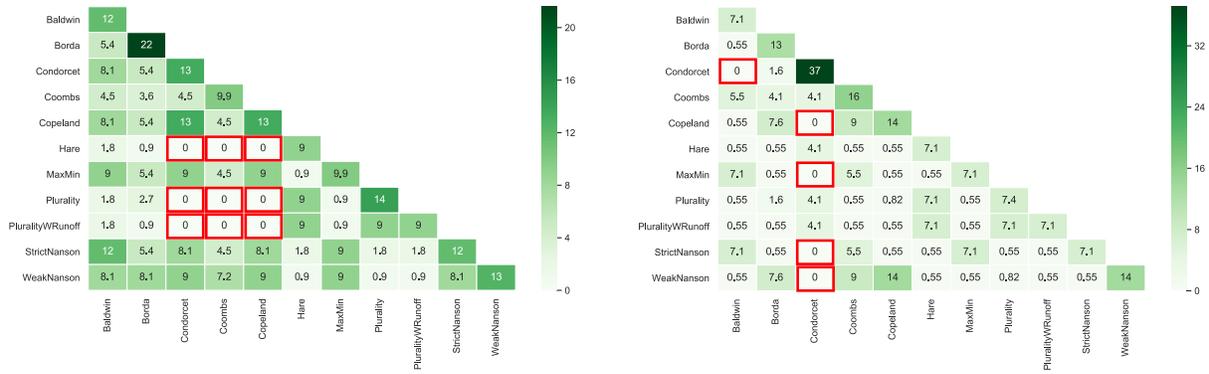

\includegraphics[scale=0.28]{TARKgraphs/sureoptimistdominancemanipulations37profilesPRINTED.pdf}\quad\quad
\includegraphics[scale=0.28]{TARKgraphs/surepessimistdominancemanipulations36profilesPRINTED.pdf}
\caption{{\small Percentage of profiles witnessing sure optimistic dominance manipulation for $(3,7)$ (left) and sure pessimistic dominance manipulation for $(3,6)$ (right).}}\label{OptPessFig}
\end{figure}

Even if a set $S$ does not improve on all of its subsets, that $S$ is less susceptible to sure $\Delta$-manipulation than one of its subset may still be significant. An election designer who intends to use method $f$ may wish to leave voters uncertain between $f$ and $f'$ in order to reduce the chance that voters will surely manipulate, relative to what would happen if voters knew the method was $f$, even if there is no reduction relative to what would happen if the planner intended to use $f'$ and voters knew this. For example, in Figure \ref{3437sureweak} for $(3,7)$, someone intending to use \texttt{Plurality} could reduce the percentage of profiles in which a voter will surely manipulate from 29\% to 9\% by leaving the voter uncertain between \texttt{Plurality} and \texttt{Hare}, even though an election designer intending to use \texttt{Hare} would have no incentive to do so, since the percentage of profiles witnessing sure manipulation for \texttt{Hare} by itself is already~9\%. 

A striking pattern in Figures \ref{3437sureweak}, \ref{SureManyVoters}, and \ref{OptPessFig} is that pairing \texttt{Borda} with another method leads to significant reductions in sure dominance manipulation. Using one or more methods to help reduce manipulation for a preferred method is even more important in the case of \textit{safe} manipulation discussed~next.

\section{Safe manipulation}\label{Section:SafeManipulation}

We now turn to a less conservative approach to strategic voting under uncertainty about the voting method: submit an insincere ranking whenever you know that doing so will lead to an outcome that is at least as good and might lead to a better outcome.     
 
\begin{definition} Let $(\mathbf{P},i)$ be a pointed profile, $\Delta$ a dominance notion, and $S$ a set of voting methods. Then $(\mathbf{P},i)$ \textit{witnesses safe $\Delta$-manipulability for $S$} iff there is a profile $\mathbf{P}'$ differing from $\mathbf{P}$ only in $i$'s ballot such that:
$ \forall f\in S: f(\mathbf{P}')\geq_{\mathbf{P}_i}^\Delta f(\mathbf{P}) \,\mbox{ and }\, \exists f\in S: f(\mathbf{P}')>_{\mathbf{P}_i}^\Delta f(\mathbf{P})$. We then say that $(\mathbf{P},i)$ witnesses safe $\Delta$-manipulability for $S$ \textit{by transitioning to $\mathbf{P}'$}. A profile $\mathbf{P}$ \textit{witnesses safe $\Delta$-manipulability for $S$} iff there is an $i\in V$ such that $(\mathbf{P},i)$ witnesses safe $\Delta$-manipulability for $S$.\end{definition}

\begin{remark}\label{HarmlessRemark} A variant of safe manipulability, which we will call \textit{harmless manipulability}, says that you should submit an insincere ranking whenever you know that doing so \textit{will not lead to a worse outcome} and might lead to a better outcome:
$\forall f\in S: f(\mathbf{P}')\not<_{\mathbf{P}_i}^\Delta f(\mathbf{P}) \,\mbox{ and }\, \exists f\in S: f(\mathbf{P}')>_{\mathbf{P}_i}^\Delta f(\mathbf{P})$. Since $f(\mathbf{P}') \not <_{\mathbf{P}_i}^{weak} f(\mathbf{P})$ does not imply $f(\mathbf{P}') \geq_{\mathbf{P}_i}^{weak} f(\mathbf{P})$, harmless weak dominance manipulability does not imply safe weak dominance manipulability for a given $(\mathbf{P},i)$ transitioning to $\mathbf{P}'$. But for pessimistic and optimistic dominance, harmless and safe manipulability are equivalent.
\end{remark} 

For certain profiles and uncertainty sets $S$, submitting an insincere ranking may lead to a better outcome with one method in $S$ but a worse outcome with another method in $S$, in which case it is not safe to manipulate using that insincere ranking.

\begin{example}\label{NotSafeEx}
Let $\mathbf{P}$ be  the following $(3,5)$-profile:\\

\begin{minipage}{1in}
\begin{center}
\begin{tabular}{c|c|c|c|c}
$1$ & $2$ & $3$ & $4$ & $5$\\\hline
$c$&  $a$ &  $b$ &  $c$ & $a$  \\
$b$&  $c$ &  $a$ &  $b$ &  $c$ \\
$a$ &  $b$ &  $c$ &  $a$ &  $b$\\
\end{tabular}
\end{center}
\end{minipage}\hspace{.5in}\begin{minipage}{4.55in}
Here $\texttt{Hare}(\mathbf{P})=\{a\}$, 
$\texttt{Borda}(\mathbf{P})=\{c\}$, and $\texttt{MaxMin}(\mathbf{P})=\{a, c\}$.  Suppose voter 1 changes her ranking to $b\ a\ c$, resulting in $\mathbf{P}'$. Then $\texttt{Hare}(\mathbf{P}')=\{b\}$,  $\texttt{Borda}(\mathbf{P}')=\{a\}$, and   $\texttt{MaxMin}(\mathbf{P}')=\{a, b\}$. Since $\{b\} >_\mathbf{P}^{weak} \{a\}$, voter 1 has an incentive to manipulate with \texttt{Hare}.    But voter 1 does not have an incentive to manipulate with \texttt{Borda}, since $\{c\}>^{weak}_{\mathbf{P}_1}\{a\}$, or \texttt{MaxMin}, since $\{a, b\}\not\geq^{weak}_{\mathbf{P}_1}\{a,c\}$. 
\end{minipage}\\

\noindent Thus, $(\mathbf{P}, 1)$ does not witness safe weak dominance manipulation for \texttt{Borda} together with any nonempty subset of $\{\texttt{Hare}, \texttt{MaxMin}\}$.
\end{example}

It seems too much to hope to \textit{eliminate} safe manipulation by adding reasonable methods to $S$; for this would require that for \textit{every} profile in which a manipulation results in a better outcome for one method in $S$, it results in a worse outcome for another method in $S$, which seems unlikely to hold for a set of reasonable methods. However, one can eliminate safe manipulation by adding methods that would be considered unreasonable by themselves.

\begin{example}\label{PivatoEx} For any distinct $x,y\in C$ and $i\in V$, let $f_{x,y,i}$ be the method such that $f_{x,y,i}(\mathbf{P})$ selects as the winner whichever of $x$ and $y$ is ranked higher according to $\mathbf{P}_i$ (cf.~\cite{nunez-pivato}). It is easy to see that if $S$ contains $f_{x,y,i}$ for each distinct $x,y\in C$, then $i$ cannot safely manipulate with $S$.
\end{example}

Although eliminating safe manipulation with reasonable methods may be too much to hope for, one can \textit{reduce} safe manipulation. Thus, we are  interested in the following analogue of Definition \ref{LessSure}.

  \begin{definition}\label{LessSafe} For any sets $S$ and $S'$ of voting methods, $S$ is \textit{less susceptible to safe} \textit{$\Delta$-manipulation than $S'$ for $(n,m)$-profiles} (resp.~\textit{pointed $(n,m)$-profiles}) iff there are fewer $(n,m)$-profiles (resp.~pointed $(n,m)$-profiles) witnessing sure safe $\Delta$-manipulation for $S$ than there are for $S'$.
\end{definition}

\begin{figure}[h]
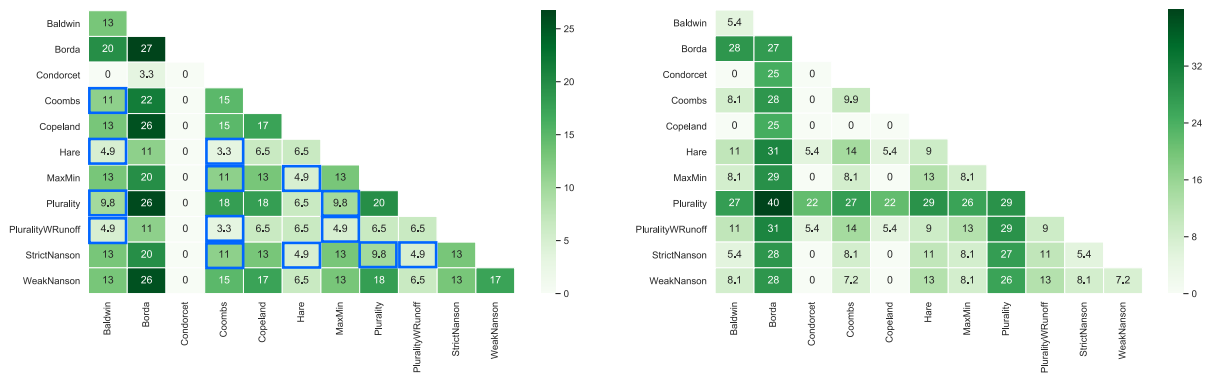

\includegraphics[scale=0.28]{TARKgraphs/safeweakdominancemanipulations36profilesPRINTED.pdf}\quad\quad
\includegraphics[scale=0.28]{TARKgraphs/safeweakdominancemanipulations37profilesPRINTED.pdf}
\caption{{\small Percentage of profiles witnessing safe weak dominance manipulation for $(3,6)$ (left) and $(3,7)$ (right).}}\label{SafeHeat}
\end{figure}

Figure \ref{SafeHeat} shows the percentage of profiles witnessing safe weak dominance manipulation for $(3,6)$ and $(3,7)$ for sets of two voting methods. All of the following can happen: (1) Unlike with sure manipulation, with safe manipulation a voter who is uncertain between methods $f$ and $f'$ may have an incentive to manipulate on \textit{more} profiles than a voter who knows the method is $f$ and \textit{more} profiles than a voter who knows the method is $f'$. E.g., this happens for $(3,7)$ when $f=\mathtt{Borda}$ and $f'=\mathtt{Hare}$. (2) A voter who is uncertain between methods $f$ and $f'$ may have an incentive to manipulate on \textit{fewer} profiles than a voter who knows the method is $f$ and \textit{fewer} profiles than a voter who knows the method if $f'$. E.g., this happens for $(3,6)$ when $f=\mathtt{Coombs}$ and $f'=\mathtt{Hare}$. In Figure \ref{SafeHeat}, all examples of this phenomenon are indicated with the blue boxes. For $(3,7)$, there are no examples.  \begin{wrapfigure}{l}{0.55\textwidth} \begin{center}
\includegraphics[scale=0.4]{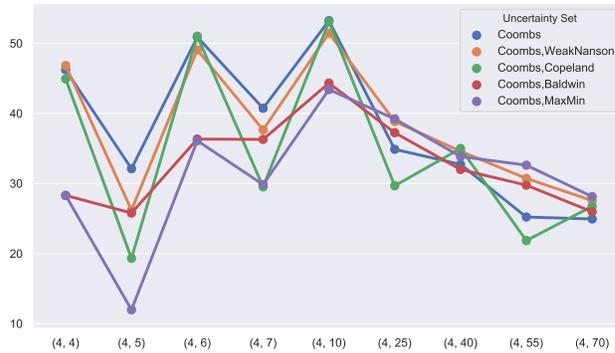}
\caption{{\small Percentage of $(4,m)$-profiles witnessing safe \newline weak dominance manipulation for \texttt{Coombs} alone and \newline  \texttt{Coombs} paired with several other methods. \newline $\,$}}\label{SafeManyVoters}
\end{center}
\end{wrapfigure} 
(3) A voter who is uncertain between methods $f$ and $f'$ may have an incentive to manipulate on \textit{fewer} profiles than a voter who knows the method is $f$ but \textit{more} profiles than a voter who knows the method is $f'$. E.g., this happens with $(3,6)$ when $f=\mathtt{Borda}$ and $f'=\mathtt{Hare}$.  In case (3), an election designer who intends to use method $f$ may decide to leave voters uncertain between $f$ and $f'$ in order to decrease the chance that voters will safely manipulate (cf.~the end of Section \ref{ReductionSection}).\footnote{In this case, a sophisticated voter with access to the manipulation data for $f$, $f'$, and $\{f,f'\}$ could infer that the election designer intenders to use $f$. For example, if $S=\{\mathtt{Borda},\mathtt{Hare}\}$ for $(3,6)$, then such a voter could infer that the designer intends to use \texttt{Borda}, since a designer who intends to use \texttt{Hare} will not decrease manipulation with $S=\{\mathtt{Borda},\mathtt{Hare}\}$. If the designer anticipates that voters are sophisticated in this way, then she should move from a single method to a pair only in case (2).} E.g., Figure \ref{SafeManyVoters} shows how an election designer intending to use \texttt{Coombs} could pair \texttt{Coombs} with several other methods to form an uncertainty set of two methods in order to decrease the percentage of profiles in which a voter will safely manipulate.

\section{Expected manipulation}\label{Section:ExpectedManipulation}

Our last approach to strategic voting under uncertainty about the voting method is the most liberal: assuming one's uncertainty about the voting method is given by a lottery on the set of voting methods, submit an insincere ranking if and only if doing so is more likely to lead to a better outcome than to lead to a worse outcome. Recall that a \textit{lottery} on a set $Y$ is a function $\nu : Y\to [0,1]$ such that $\sum_{y\in Y}\nu(y)=1$. For $X\subseteq Y$, let $\nu(X)=\sum_{x\in X}\nu(x)$.

\begin{definition} Let $(\mathbf{P},i)$ be a pointed profile, $\Delta$ a dominance notion, $S$ a set of voting methods, and $\nu$ a lottery on $S$. Then $(\mathbf{P},i)$ \textit{witnesses $\nu$-expected $\Delta$-manipulability for $S$} if and only if there is a profile $\mathbf{P}'$ differing from $\mathbf{P}$ only in $i$'s ballot such that
$
\nu (\{f\in S\mid  f(\mathbf{P}')>_{\mathbf{P}_i}^\Delta f(\mathbf{P})\} ) > \nu(\{f\in S\mid  f(\mathbf{P}')<_{\mathbf{P}_i}^\Delta f(\mathbf{P})\} )$.
Then we say $(\mathbf{P},i)$ witnesses $\nu$-expected $\Delta$-manipulability for $S$ \textit{by transitioning to $\mathbf{P}'$}. \end{definition}

For simplicity, here we focus on $\nu$ being the uniform lottery on $S$, in which case we simply speak of `expected $\Delta$-manipulability' instead of `$\nu$-expected $\Delta$-manipulability'. For $\nu$ uniform, this amounts to simply counting the number of voting methods in $S$ that lead to a better outcome vs. a worse outcome.

\begin{example}\label{ExpectedEx} The pointed profile $(\mathbf{P},1)$ in Example \ref{NotSafeEx} does not witness expected weak dominance manipulability for $\{\mathtt{Borda},\mathtt{Hare}\}$ by transitioning to $\mathbf{P}'$ since voter 1's manipulation results in a better outcome according to one method (\texttt{Hare}) but a worse outcome according to another method (\texttt{Borda}). However, like \texttt{Hare},  \texttt{Baldwin}  chooses $\{a\}$ as the set of winners in $\mathbf{P}$ and $\{b\}$ as the set in $\mathbf{P}'$. Hence $(\mathbf{P},1)$ witnesses expected weak dominance manipulability for $\{\mathtt{Baldwin},\mathtt{Borda},\mathtt{Hare}\}$  by transitioning to $\mathbf{P}'$, as two methods lead to a better outcome and only one leads to a worse outcome.\end{example}

The following fact is immediate from the definitions.

\begin{fact}\label{SafeToExpected} If $(\mathbf{P},i)$ witnesses safe $\Delta$-dominance manipulability for $S$, then $(\mathbf{P},i)$ witnesses expected $\Delta$-dominance manipulability for $S$. 
\end{fact}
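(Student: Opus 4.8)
The plan is to unwind the two definitions and show that the safe-manipulation conditions are strictly stronger than the expected-manipulation condition for the uniform lottery. Let $(\mathbf{P},i)$ witness safe $\Delta$-manipulability for $S$ by transitioning to some $\mathbf{P}'$. By definition this gives us two facts: first, $\forall f\in S:\ f(\mathbf{P}')\geq_{\mathbf{P}_i}^\Delta f(\mathbf{P})$, and second, $\exists f\in S:\ f(\mathbf{P}')>_{\mathbf{P}_i}^\Delta f(\mathbf{P})$. I would use the \emph{same} witnessing profile $\mathbf{P}'$ to establish expected manipulability, so the only work is to verify the strict inequality $\nu(\{f\mid f(\mathbf{P}')>_{\mathbf{P}_i}^\Delta f(\mathbf{P})\}) > \nu(\{f\mid f(\mathbf{P}')<_{\mathbf{P}_i}^\Delta f(\mathbf{P})\})$ for $\nu$ the uniform lottery.

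The key observation is that the nonstrict dominance $f(\mathbf{P}')\geq_{\mathbf{P}_i}^\Delta f(\mathbf{P})$ is incompatible with the reverse strict dominance $f(\mathbf{P}')<_{\mathbf{P}_i}^\Delta f(\mathbf{P})$. For each of the three dominance notions in Definition~\ref{DominanceNotions}, the relation $\geq_{\mathbf{P}_i}^\Delta$ excludes $<_{\mathbf{P}_i}^\Delta$ (for optimistic and pessimistic dominance this is immediate since $\geq^\Delta$ is defined by $\mathbf{R}_i$ applied to a single max or min element and $\mathbf{R}_i$ is the reflexive closure of the strict order $\mathbf{P}_i$; for weak dominance, $X\geq^{weak}Y$ forbids any $x\in X$, $y\in Y$ with $y\mathbf{P}_i x$, which is exactly what $X<^{weak}Y$ would require). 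Hence the first safe-manipulation condition guarantees that the ``worse'' set $\{f\in S\mid f(\mathbf{P}')<_{\mathbf{P}_i}^\Delta f(\mathbf{P})\}$ is \emph{empty}, so its $\nu$-measure is $0$. Meanwhile the second condition guarantees the ``better'' set $\{f\in S\mid f(\mathbf{P}')>_{\mathbf{P}_i}^\Delta f(\mathbf{P})\}$ is nonempty, and since $\nu$ is uniform on the finite set $S$, every nonempty subset has strictly positive measure. Therefore the strict inequality $\nu(\text{better})>0=\nu(\text{worse})$ holds, and $(\mathbf{P},i)$ witnesses expected $\Delta$-manipulability for $S$ by transitioning to $\mathbf{P}'$.

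Since this statement is explicitly labelled ``immediate from the definitions,'' I do not expect a genuine obstacle. The only point requiring care is the claim that $f(\mathbf{P}')\geq_{\mathbf{P}_i}^\Delta f(\mathbf{P})$ rules out $f(\mathbf{P}')<_{\mathbf{P}_i}^\Delta f(\mathbf{P})$, which must be checked for each dominance notion separately; this is the one place where one could slip if a dominance relation were not antisymmetric in the required sense. Given the definitions in the excerpt, this check is routine, so the whole argument is a one-line unfolding: the empty ``worse'' set has measure $0$ and the nonempty ``better'' set has positive measure under the uniform $\nu$.
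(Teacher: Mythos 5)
Your proof is correct and is exactly the unfolding the paper intends when it calls the fact ``immediate from the definitions'': the nonstrict dominance condition of safe manipulability makes the ``worse'' set empty, the existential condition makes the ``better'' set nonempty, and the uniform lottery assigns positive measure to every nonempty subset of $S$. Your explicit check that $f(\mathbf{P}')\geq_{\mathbf{P}_i}^\Delta f(\mathbf{P})$ excludes $f(\mathbf{P}')<_{\mathbf{P}_i}^\Delta f(\mathbf{P})$ for each of the three dominance notions is the one non-trivial point, and you have it right.
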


The converse of Fact \ref{SafeToExpected} does not hold, as shown by Example \ref{ExpectedEx}. But for $|S|=2$ expected weak dominance manipulability is equivalent to harmless weak dominance manipulability (recall Remark \ref{HarmlessRemark}).

Using the obvious notion of \textit{less susceptible to expected $\Delta$-dominance manipulation} analogous to Definitions \ref{LessSure} and \ref{LessSafe}, we have the following result, inspired by \cite{nunez-pivato}, showing how adding a method to the set $S$ may reduce expected manipulation. 

\begin{proposition}\label{ReduceExpected} Suppose $S$ is a set of voting methods such that for some $f\in S$, $(n,m)$, pointed $(n,m)$-profile $(\mathbf{P},i)$, and $a,b\in C$, (1) $(\mathbf{P},i)$ witnesses weak dominance manipulability for $f$, and (2) for any $\mathbf{P}'$ differing from $\mathbf{P}$ only in $i$'s ranking, if $(\mathbf{P},i)$ witnesses weak dominance manipulability for $f$ by transitioning to $\mathbf{P}'$, then $\mathbf{P}'$ differs from $\mathbf{P}$ in $i$'s ranking of $a$ vs. $b$, and for all $g\in S\setminus \{f\}$, $g(\mathbf{P})=g(\mathbf{P}')$. 
Then where $f_{a,b,i}$ is the method defined in Example \ref{PivatoEx}, $S\cup \{f_{a,b,i}\}$ is less susceptible to expected weak dominance manipulation than $S$ for pointed $(n,m)$-profiles.
\end{proposition}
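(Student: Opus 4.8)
The plan is to show that the set of pointed $(n,m)$-profiles witnessing expected weak dominance manipulability for $S\cup\{f_{a,b,i}\}$ is a \emph{proper} subset of the corresponding set for $S$; by the counting definition of ``less susceptible'' (the analogue of Definitions \ref{LessSure} and \ref{LessSafe}), this is exactly the claim. There are two ingredients: a monotonicity step yielding the inclusion, and a properness step exhibiting a profile that drops out, namely $(\mathbf{P},i)$ itself.

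For monotonicity, the key observation is that $f_{a,b,i}$ (as defined in Example \ref{PivatoEx}) depends only on voter $i$'s ballot. First I would dispose of pointed profiles $(\mathbf{Q},j)$ with $j\neq i$: every deviation available to $j$ leaves $i$'s ballot, hence $f_{a,b,i}(\mathbf{Q})$, fixed, so $f_{a,b,i}$ contributes $0$ to both the ``better'' and ``worse'' tallies, and the expected-manipulability condition for $S\cup\{f_{a,b,i}\}$ coincides with that for $S$. Next, for $j=i$: any deviation $\mathbf{Q}''$ either preserves $i$'s relative ranking of $a$ and $b$, in which case $f_{a,b,i}(\mathbf{Q}'')=f_{a,b,i}(\mathbf{Q})$, or reverses it, in which case $f_{a,b,i}(\mathbf{Q}'')$ now selects whichever of $a,b$ is ranked \emph{lower} by $\mathbf{Q}_i$, so $f_{a,b,i}(\mathbf{Q}'')<_{\mathbf{Q}_i}^{weak}f_{a,b,i}(\mathbf{Q})$. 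Either way, passing from $S$ to $S\cup\{f_{a,b,i}\}$ leaves $|\{g: g(\mathbf{Q}'')>_{\mathbf{Q}_i}^{weak}g(\mathbf{Q})\}|$ unchanged and can only enlarge $|\{g: g(\mathbf{Q}'')<_{\mathbf{Q}_i}^{weak}g(\mathbf{Q})\}|$. Hence any witnessing deviation for $S\cup\{f_{a,b,i}\}$ is already a witnessing deviation for $S$, so the witnessing set can only shrink.

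For properness I would use $(\mathbf{P},i)$. By hypothesis (1) there is a deviation $\mathbf{P}'$ with $f(\mathbf{P}')>_{\mathbf{P}_i}^{weak}f(\mathbf{P})$; by hypothesis (2) every such deviation fixes all $g\in S\setminus\{f\}$, so along $\mathbf{P}'$ exactly one method in $S$ improves and none worsens, whence $(\mathbf{P},i)$ witnesses expected manipulability for $S$. It remains to show $(\mathbf{P},i)$ is \emph{not} a witness for $S\cup\{f_{a,b,i}\}$, i.e.\ that no deviation succeeds. For a deviation $\mathbf{P}''$ along which $f$ improves, hypothesis (2) forces $\mathbf{P}''$ to reverse $i$'s ranking of $a$ versus $b$ and to fix every other $g\in S$; then by the monotonicity computation $f_{a,b,i}$ strictly worsens, so there is exactly one ``better'' method ($f$) and at least one ``worse'' method ($f_{a,b,i}$), and the strict-majority condition fails.

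The main obstacle is the remaining case in the properness step: a deviation $\mathbf{P}''$ along which $f$ does \emph{not} improve but along which enough other methods of $S$ improve to keep the net count positive even after $f_{a,b,i}$ is adjoined. Ruling this out is the crux, and it requires arguing that at $(\mathbf{P},i)$ every expected-manipulating deviation must route through an improvement of $f$ (so that hypothesis (2) applies and $f_{a,b,i}$ blocks it)---in effect, that $f$ is the unique lever for manipulation at this profile. Under the intended reading of the hypotheses this holds, and once it is secured, $(\mathbf{P},i)$ lies in the witnessing set for $S$ but not for $S\cup\{f_{a,b,i}\}$; combined with the monotonicity inclusion, the latter set is strictly smaller, which is precisely the assertion that $S\cup\{f_{a,b,i}\}$ is less susceptible to expected weak dominance manipulation than $S$ for pointed $(n,m)$-profiles.
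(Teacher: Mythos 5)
Your proposal follows essentially the same two-step strategy as the paper's proof: first, a monotonicity step using the fact that $f_{a,b,i}$ can never yield a strictly better outcome for $i$ under any unilateral deviation (the paper's property (i)), so that every pointed profile witnessing expected weak dominance manipulability for $S\cup\{f_{a,b,i}\}$ already witnesses it for $S$; second, a strictness step using $(\mathbf{P},i)$ itself, where hypothesis (2) forces any $f$-improving deviation to flip $a$ vs.\ $b$, so that $f_{a,b,i}$ strictly worsens (the paper's property (ii)) and the better-count ($\{f\}$) fails to exceed the worse-count ($\{f_{a,b,i}\}$). Your treatment of the monotonicity step is in fact slightly more careful than the paper's, since you separately dispose of pointed profiles $(\mathbf{Q},j)$ with $j\neq i$, for which $f_{a,b,i}$ is constant under $j$'s deviations.

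The case you flag as ``the crux''---a deviation $\mathbf{P}''$ from $(\mathbf{P},i)$ under which $f$ does not improve but some $g\in S\setminus\{f\}$ does---is a genuine issue, and you are right not to pretend it follows from the computation already given: hypotheses (1) and (2) constrain only those deviations through which $f$ improves, so they do not by themselves exclude a $\mathbf{P}''$ that preserves $i$'s ranking of $a$ vs.\ $b$, leaves $f$ and $f_{a,b,i}$ unchanged, and improves some other $g$; such a $\mathbf{P}''$ would make $(\mathbf{P},i)$ a witness for $S\cup\{f_{a,b,i}\}$ as well, destroying strictness at this profile. However, you should not then assert that ``under the intended reading of the hypotheses this holds'' without argument---that is precisely the step that needs either an additional hypothesis (e.g., that every deviation witnessing expected manipulability for $S$ from $(\mathbf{P},i)$ is one through which $f$ improves) or a different strictness witness. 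You should also be aware that the paper's own proof shares this omission: it only verifies that $(\mathbf{P},i)$ does not witness expected manipulability for $S\cup\{f_{a,b,i}\}$ \emph{by transitioning to} an $f$-improving $\mathbf{P}'$, whereas the counting claim requires ruling out \emph{all} transitions. So your attempt reproduces the published argument, and your honesty in isolating the unproved case is a point in its favor rather than a defect relative to the paper.
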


\begin{proof} The key properties of $f_{a,b,i}$ are that (i) there are no profiles $\mathbf{P}$ and $\mathbf{P}'$ differing only in $i$'s ranking such that we have $f_{a,b,i}(\mathbf{P}')>^{weak}_{\mathbf{P}_i} f_{a,b,i}(\mathbf{P})$, and (ii) if $\mathbf{P}$ and $\mathbf{P}'$ differ in $i$'s ranking of $a$ vs.~$b$, then $f_{a,b,i}(\mathbf{P}')<^{weak}_{\mathbf{P}_i} f_{a,b,i}(\mathbf{P})$. It follows from (i) that any pointed profile witnessing expected weak dominance manipulability for ${S\cup \{f_{a,b,i}\}}$ also witnesses expected weak dominance manipulability for $S$. Thus, to prove the proposition, we need only find a pointed profile witnessing expected weak dominance manipulability for  $S$ but not ${S\cup \{f_{a,b,i}\}}$. By assumption, there is a pointed $(n,m)$-profile $(\mathbf{P},i)$ and $a,b\in C$ satisfying items (1) and (2) of the proposition. Given (1), consider any $\mathbf{P}'$ such that $(\mathbf{P},i)$ witnesses weak dominance manipulability for $f$ by transitioning to $\mathbf{P}'$. It then follows by (2) that $(\mathbf{P},i)$ witnesses \textit{expected} weak dominance manipulability for $S$ by transitioning to $\mathbf{P}'$. However, we claim that $(\mathbf{P},i)$ does not witness expected weak dominance manipulability for $S\cup \{f_{a,b,i}\}$ by transitioning to $\mathbf{P}'$. This follows from the facts that $f$ and $f_{a,b,i}$ are equally likely according to the uniform measure, $\{h\in S\cup \{f_{a,b,i}\}\mid h(\mathbf{P}')<^{weak}_{\mathbf{P}_i}h(\mathbf{P})\}=\{f_{a,b,i}\}$ by (2) and (ii), and $\{h\in S\cup \{f_{a,b,i}\}\mid h(\mathbf{P}')>^{weak}_{\mathbf{P}_i}h(\mathbf{P})\}=\{f\}$ by (2).\end{proof}

We conclude this section with an example in which the conditions of Proposition \ref{ReduceExpected} apply.

\begin{example}Consider the following $(3,5)$-profile $\mathbf{P}$:\\

\begin{minipage}{1in}
\begin{center}
\begin{tabular}{c|c|c|c|c}
$1$ & $2$ & $3$ & $4$ & $5$\\\hline
$c$&  $c$ &  $a$ &  $a$ & $a$  \\
$a$&  $a$ &  $c$ &  $c$ &  $c$ \\
$b$ &  $b$ &  $b$ &  $b$ &  $b$\\
\end{tabular}
\end{center}
\end{minipage}\hspace{.5in}\begin{minipage}{4.55in}
Let $S=\{\texttt{Borda},\texttt{Coombs}\}$. $(\mathbf{P},i)$ witnesses the weak manipulability of $\texttt{Borda}$ by transitioning to the profile $\mathbf{P'}$ in which $i$'s new ranking is $c \, b \, a$, as the $\texttt{Borda}$  winning set in $\mathbf{P}$ is $\{a\}$, the \texttt{Borda} winning set in $\mathbf{P}'$ is $\{a,c\}$, and $\{a,c\}>^{weak}_{\mathbf{P}_1}\{a\}$. 
\end{minipage}\\

\noindent Moreover, this is the only $\mathbf{P}'$ differing from $\mathbf{P}$ only in $i$'s ranking such that $(\mathbf{P},i)$ witnesses the weak manipulability of $\texttt{Borda}$ by transitioning to $\mathbf{P'}$. Finally, the winning set for \texttt{Coombs} in both $\mathbf{P}$ and $\mathbf{P}'$ is $\{a\}$. Thus, the conditions of Proposition \ref{ReduceExpected} are satisfied for $f=\mathtt{Borda}$, so $\{\texttt{Borda},\texttt{Coombs},f_{a,b,i}\}$ is less susceptible to expected weak dominance manipulation than $\{\texttt{Borda},\texttt{Coombs}\}$.\end{example}

\section{Relation to probabilistic social choice}\label{Section:ProbabilisticSocialChoice}

In this section, we briefly relate our work to strategic voting in the setting of probabilistic social choice. 

\begin{definition} A \textit{probabilistic social choice function} (PSCF) is a function $F$ assigning to each profile $\mathbf{P}$ a lottery $F(\mathbf{P})$ on $C$.
\end{definition}

To define manipulation of PSCFs, we need a notion of when a voter prefers one lottery to another. Among many possible options (see, e.g., \cite[Sec.~1.3.2]{brandt-survey}), the following is popular.

\begin{definition} Let $\mu$ and $\mu'$ be lotteries on $C$ and $(\mathbf{P},i)$ a pointed profile. We say that $\mu$ \textit{stochastically dominates} $\mu'$ in $(\mathbf{P},i)$ if and only if for every $x\in C$, the probability that $\mu$ selects a candidate ranked at least as highly as $x$ by $i$ is greater than or equal to the probability that $\mu'$ selects a candidate ranked at least as highly as $x$ by $i$:
\[\forall x\in C: \underset{y\,:\, y\mathbf{R}_i x}{\sum} \mu(y)\geq \underset{y \,:\, y\mathbf{R}_i x}{\sum} \mu'(y).\]
We write $\mu \succsim_{\mathbf{P}_i} \mu'$ if $\mu$ stochastically dominates $\mu'$ in $(\mathbf{P},i)$ and $\mu\succ_{\mathbf{P}_i}\mu'$ if $\mu \succsim_{\mathbf{P}_i} \mu'$ but $\mu' \not\succsim_{\mathbf{P}_i} \mu$.
\end{definition}

Note that $\mu \succsim_{\mathbf{P}_i} \mu'$ if and only if for \textit{every}  utility function on $C$ that is compatible with $\mathbf{P}_i$, the expected utility of $\mu$ is at least as great as the expected utility of $\mu'$ (see, e.g., \cite[p.~302-3]{Bogomolnaia2001}).

\begin{definition}\label{SD} Let $F$ be a PSCF. A pointed profile $(\mathbf{P},i)$ \textit{witnesses stochastic dominance manipulability for $F$} if and only if there is a profile $\mathbf{P}'$ differing only in $i$'s ranking such that $F(\mathbf{P}')\succ_{\mathbf{P}_i} F(\mathbf{P})$.\footnote{Gibbard's \cite{gibbard1977} notion of srategyproofness for a PSCF $F$ is equivalent to the condition (called \textit{strong SD-strategyproofness} in \cite{brandt-survey}) that for every pointed profile $(\mathbf{P},i)$ and profile $\mathbf{P}'$ differing only in $i$'s ranking, $F(\mathbf{P})\succsim_{\mathbf{P}_i} F(\mathbf{P}')$, i.e., there is no profile $(\mathbf{P},i)$ witnessing manipulation in the sense that there exists a profile $\mathbf{P}'$ differing only in $i$'s ranking such that $F(\mathbf{P})\not\succsim_{\mathbf{P}_i} F(\mathbf{P}')$ (which Gibbard states in the equivalent form: there is some utility function on $C$ such that the expected utility of $F(\mathbf{P}')$ is greater than that of $F(\mathbf{P})$). Following Brandt \cite{brandt-survey}, we prefer the notion of $(\mathbf{P},i)$ witnessing stochastic dominance manipulation in Definition \ref{SD}, which is the notion used in Brandt's definition of (\textit{weak}) \textit{SD-strategyproofness}.} We then say that $(\mathbf{P},i)$ witnesses stochastic dominance manipulability for $F$ \textit{by transitioning to $\mathbf{P}'$}.
\end{definition}
\noindent For results on stochastic dominance manipulability, see \cite{brandt-survey}.

To relate the above notions to this paper, we observe how any set $S$ of voting methods gives rise to a PSCF, assuming that (i) each method in $S$ is equally likely to be used and (ii) each candidate in the set of winners selected by a method is equally likely to be chosen as the unique winner by a tiebreaking mechanism. Then the probability that a given candidate $a\in C$ will be chosen as the winner is:
\begin{eqnarray*}Pr(a \mbox{ wins})&=& \underset{f\in S}{\sum} Pr(a \mbox{ wins}\,|\, f\mbox{ is used}) \times Pr(f\mbox{ is used}) \\
&=& \underset{\underset{a\in f(\mathbf{P})}{f\in S}}{\sum} Pr(a \mbox{ wins}\,|\, f\mbox{ is used}) \times  Pr(f\mbox{ is used}) + \underset{\underset{a\not\in f(\mathbf{P})}{f\in S}}{\sum} Pr(a \mbox{ wins}\,|\, f\mbox{ is used}) \times Pr(f\mbox{ is used})\\
&=& \Big(\underset{\underset{a\in f(\mathbf{P})}{f\in S}}{\sum} \frac{1}{|f(\mathbf{P})|} \times  \frac{1}{|S|}\Big) +
 \Big(\underset{\underset{a\not\in f(\mathbf{P})}{f\in S}}{\sum} 0 \times \frac{1}{|S|}\Big) \quad = \quad\underset{\underset{a\in f(\mathbf{P})}{f\in S}}{\sum} \frac{1}{|f(\mathbf{P})|} \times  \frac{1}{|S|}.
\end{eqnarray*}
Thus, for a profile $\mathbf{P}$ and set $S$ of voting methods, we define the lottery $\mu^\mathbf{P}_S$ on $C$ by
\[\mu^\mathbf{P}_S(a) = \underset{\underset{a\in f(\mathbf{P})}{f\in S}}{\sum} \frac{1}{|f(\mathbf{P})|} \times  \frac{1}{|S|}.\]
Finally, for any set $S$ of voting methods, we define the PSCF $F_S$ by $F_S(\mathbf{P})= \mu^\mathbf{P}_S$. 
\begin{remark} For any set $S$ of voting methods and lottery $\nu$ on $S$, one can define a PSCF $F_{S,\nu}$ in the obvious way by weighting the methods in $S$ according to $\nu$ (and one can modify assumption (ii) with a non-uniform measure for tiebreaking). For simplicity, here we focus on the uniform measure on $S$.\end{remark}

The following fact can be verified from the definitions.

\begin{fact} If $(\mathbf{P},i)$ witnesses safe weak dominance manipulation for $S$ by transitioning to $\mathbf{P}'$, then $(\mathbf{P},i)$ witnesses stochastic dominance manipulation for $F_S$ by transitioning to $\mathbf{P}'$.
\end{fact}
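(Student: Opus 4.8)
The plan is to unpack both definitions and show that safe weak dominance manipulability is strong enough to force stochastic dominance of the induced lotteries. Suppose $(\mathbf{P},i)$ witnesses safe weak dominance manipulability for $S$ by transitioning to $\mathbf{P}'$. By definition this means $\forall f\in S: f(\mathbf{P}')\geq_{\mathbf{P}_i}^{weak} f(\mathbf{P})$ and $\exists f\in S: f(\mathbf{P}')>_{\mathbf{P}_i}^{weak} f(\mathbf{P})$. I want to conclude $F_S(\mathbf{P}')\succ_{\mathbf{P}_i} F_S(\mathbf{P})$, i.e.~$\mu^{\mathbf{P}'}_S \succ_{\mathbf{P}_i} \mu^{\mathbf{P}}_S$. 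The cleanest route is through the expected-utility characterization noted just after the definition of stochastic dominance: $\mu \succsim_{\mathbf{P}_i}\mu'$ iff every utility function compatible with $\mathbf{P}_i$ assigns $\mu$ at least the expected utility it assigns $\mu'$. So I would fix an arbitrary utility function $u$ compatible with $\mathbf{P}_i$ and compare expected utilities method by method.

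First I would observe the key consequence of weak dominance for a single method $f$: if $f(\mathbf{P}')\geq_{\mathbf{P}_i}^{weak} f(\mathbf{P})$, then \emph{every} candidate in $f(\mathbf{P}')$ is ranked by $\mathbf{P}_i$ at least as highly as \emph{every} candidate in $f(\mathbf{P})$. In particular, for any compatible $u$, the minimum utility over $f(\mathbf{P}')$ is at least the maximum utility over $f(\mathbf{P})$, so the average utility of the uniform lottery over $f(\mathbf{P}')$ is at least the average utility of the uniform lottery over $f(\mathbf{P})$. Since $F_S$ assigns weight $1/|S|$ to the uniform-over-winners lottery of each method, the expected utility of $\mu^{\mathbf{P}'}_S$ decomposes as a $1/|S|$-weighted sum of these per-method averages, and termwise domination gives $\mathbb{E}_{\mu^{\mathbf{P}'}_S}[u]\geq \mathbb{E}_{\mu^{\mathbf{P}}_S}[u]$ for every compatible $u$. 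By the characterization this yields $\mu^{\mathbf{P}'}_S \succsim_{\mathbf{P}_i}\mu^{\mathbf{P}}_S$.

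To upgrade $\succsim$ to the strict $\succ$, I would use the witnessing method $f^\ast$ with $f^\ast(\mathbf{P}')>_{\mathbf{P}_i}^{weak} f^\ast(\mathbf{P})$. Strictness there means there exist $x\in f^\ast(\mathbf{P}')$ and $y\in f^\ast(\mathbf{P})$ with $x\mathbf{P}_i y$, on top of the weak domination; combined with the per-candidate inequality this makes the $f^\ast$-term strictly larger for any $u$ that strictly respects $x\mathbf{P}_i y$ (e.g.~a strict utility representation of $\mathbf{P}_i$). Hence there is some compatible $u$ with strict inequality in expected utility, which means $\mu^{\mathbf{P}}_S\not\succsim_{\mathbf{P}_i}\mu^{\mathbf{P}'}_S$; together with the $\succsim$ already shown, this gives $\mu^{\mathbf{P}'}_S\succ_{\mathbf{P}_i}\mu^{\mathbf{P}}_S$, i.e.~stochastic dominance manipulation for $F_S$ by transitioning to $\mathbf{P}'$.

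Alternatively, one can argue directly from the defining inequalities of stochastic dominance rather than via utilities, by checking for each threshold candidate $x$ that the cumulative upper-tail mass of $\mu^{\mathbf{P}'}_S$ dominates that of $\mu^{\mathbf{P}}_S$, again summing the per-method contributions. The main obstacle is the bookkeeping in the strictness step: I must ensure that the strict improvement contributed by $f^\ast$ is not silently cancelled by the other methods. This is handled by the fact that every other method only weakly dominates (never strictly worsens any candidate comparison relevant to $u$), so no term can offset the strict gain, making the overall expected utility strictly larger for the chosen $u$.
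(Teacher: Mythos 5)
Your argument is correct. Note that the paper itself gives no proof of this fact (it only remarks that it ``can be verified from the definitions''), so there is no official argument to match; what the paper presumably has in mind is the direct verification of the tail-sum inequalities in the definition of $\succsim_{\mathbf{P}_i}$, checked method by method: for each upward-closed set $U=\{y : y\mathbf{R}_i x\}$, either $f(\mathbf{P}')\subseteq U$ or (since every winner in $f(\mathbf{P}')$ is $\mathbf{R}_i$-above every winner in $f(\mathbf{P})$) $f(\mathbf{P})\cap U=\varnothing$, so each method's uniform-over-winners lottery has at least as much upper-tail mass after the transition, and the mixture inherits this. Your route instead goes through the expected-utility characterization stated after the definition of stochastic dominance, which is legitimate since the paper asserts that equivalence in both directions; it buys you a cleaner strictness step, because you avoid having to exhibit a specific threshold $x$ at which the tail mass strictly increases and instead only need one compatible utility function with a strictly larger expectation. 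The only point to tighten is your claim that the $f^\ast$-term is strictly larger: when $\min(f^\ast(\mathbf{P}'),\mathbf{P}_i)=\max(f^\ast(\mathbf{P}),\mathbf{P}_i)$ the per-candidate inequality alone gives only equality of the extreme utilities, and you need to observe that the existence of $x\in f^\ast(\mathbf{P}')$, $y\in f^\ast(\mathbf{P})$ with $x\mathbf{P}_i y$ forces at least one of the two winning sets to contain a candidate strictly above (resp.\ below) that common candidate, which is what makes the averages strictly separate. With that half-line added, the proof is complete.
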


However, stochastic manipulation does not imply safe weak manipulation.

\begin{example} 
Consider the following $(3,4)$-profile $\mathbf{P}$ for $C=\{a, b, c\}$ and $V=\{1,2,3,4\}$:\\

\begin{minipage}{1in}
\begin{center}
\begin{tabular}{c|c|c|c}
$1$ & $2$ & $3$ & $4$ \\\hline
$a$ & $a$ & $b$ & $b$\\
$b$ & $c$ & $a$ & $a$\\
$c$ & $b$ & $c$ & $c$\\
\end{tabular}
\end{center}
\end{minipage}\hspace{.2in}\begin{minipage}{4.85in}
The \texttt{Coombs}, \texttt{Copeland}, and \texttt{Hare} winning set is $\{a, b\}$.   If voter 1 changes to the ranking $c\, a\, b$, transitioning to the profile $\mathbf{P}'$, then the new winning set for \texttt{Coombs} and \texttt{Copeland} is $\{a\}$ and the new winning set for \texttt{Hare} is $\{b\}$.  Note that $\{a\} >_{\mathbf{P}_1}^{weak} \{a,b\}$ and $\{a, b\} >_{\mathbf{P}_1}^{weak} \{b\}$, so voter 1's new ranking leads to a better weak dominance outcome according to \texttt{Coombs} and \texttt{Copeland} but a worse weak dominance outcome according to \texttt{Hare}.  
\end{minipage}\\

\noindent Thus, $(\mathbf{P}, 1)$ does not witness safe weak dominance manipulability by transitioning to $\mathbf{P}'$.  

Let $S=\{\texttt{Coombs}, \texttt{Copeland}, \texttt{Hare}\}$.  The  lottery $F_S(\mathbf{P})$ is [$a:1/2$, $b: 1/2$, $c:0$]. The  lottery $F_S(\mathbf{P}')$ is
[$a:2/3$, $b: 1/3$, $c:0$]. As voter $1$'s ranking in $\mathbf{P}$ is $a\,b\,c$, $F_S(\mathbf{P}')$  stochastically dominates $F_S(\mathbf{P})$, since the probability according to  $F_S(\mathbf{P}')$ of selecting a candidate at least as good as $b$ is equal to the probability according to $F_S(\mathbf{P})$ (both are 1), and the probability according to $F_S(\mathbf{P}')$ of selecting a candidate at least as good as  $a$ (probability $2/3$) is  greater than the probability according to $F_S(\mathbf{P})$ (probability $1/2$). So $(\mathbf{P},1)$  witnesses stochastic dominance manipulability for $F_S(\mathbf{P})$ by transitioning~to~$\mathbf{P}'$.  \end{example}

It is easy to see abstractly that stochastic dominance manipulation also does not imply \textit{expected} weak dominance manipulation: e.g., if $S=\{f_1,f_2\}$ and the transition from $(\mathbf{P},i)$ to $\mathbf{P}$ is such that ${f_1(\mathbf{P})<_{\mathbf{P}_i}^{weak} f_1(\mathbf{P}')}$ and  $f_2(\mathbf{P})>_{\mathbf{P}_i}^{weak} f_2(\mathbf{P}')$, then the transition does not witness expected weak manipulation, but the amount by which $f_1$ increases the probability of getting a preferred candidate may be greater than the amount by which $f_2$ decreases the probability of getting a preferred candidate, so that the lottery $F_{f_1,f_2}(\mathbf{P}')$ stochastically dominates the lottery $F_{f_1,f_2}(\mathbf{P})$. The same idea applies for more than two methods. It remains to be seen whether an example of this kind exists with standard voting methods.

We can, however, use standard methods to show that expected weak dominance manipulation does not imply stochastic weak dominance manipulation.

\begin{example} Let $\mathbf{P}$ and $\mathbf{P}'$ be the profiles in Example \ref{NotSafeEx} and $S=\{\mathtt{Baldwin},\mathtt{Borda},\mathtt{Hare}\}$. In $\mathbf{P}$, \texttt{Baldwin} and \texttt{Hare} select $a$ as the winner, while \texttt{Borda} selects $c$ as the winner. Thus, the lottery $F_S(\mathbf{P})$ is
[$a:2/3$, $b:0$, $c:1/3$]. In $\mathbf{P}'$, \texttt{Baldwin} and \texttt{Hare} select $b$ as the winner, while \texttt{Borda} selects $a$ as the winner. Thus, the lottery $F_S(\mathbf{P}')$ is [$a:1/3$, $b: 2/3$, $c:0$].
As voter $1$'s ranking in $\mathbf{P}$ is $c\,b\,a$, $F_S(\mathbf{P}')$ does not stochastically dominate $F_S(\mathbf{P})$, since the probability according to $F_S(\mathbf{P}')$ of selecting a candidate at least as good as $c$ is not greater than or equal to the probability according to $F_S(\mathbf{P})$ of selecting a candidate at least as good as $c$. Thus, $(\mathbf{P},1)$ does not witness stochastic dominance manipulability for $F_S(\mathbf{P})$ by transitioning to $\mathbf{P}'$. However, as two of the three methods in $S$ lead to a better set of winners (in the sense of weak dominance) in $\mathbf{P}'$ than in $\mathbf{P}$ according to voter $1$'s ranking in $\mathbf{P}$, $(\mathbf{P},1)$ does witness expected weak dominance manipulability for $\{\mathtt{Baldwin},\mathtt{Borda},\mathtt{Hare}\}$ by transitioning to $\mathbf{P}'$.\end{example}

\section{Conclusion}\label{Section:Conclusion}

In this paper, we have shown that uncertainty about the voting method can be used as a barrier to manipulation. Considering such uncertainty led to three decision rules that voters may use to decide when to manipulate: sure, safe, and expected manipulation. Related issues arise when studying probabilistic voting methods, though Section \ref{Section:ProbabilisticSocialChoice} shows that our notions of sure, safe, and expected manipulation do not collapse to a standard notion of stochastic manipulation of probabilistic voting methods.

This initial study relied heavily on computer searches. Two natural next steps are (i)~to \textit{prove} additional possibility (or impossibility) theorems, like our Theorems \ref{MainThm}-\ref{MainThm2}, and (ii) to incorporate uncertainty about voting methods into the asymptotic analysis of strategic voting as the number of candidates or voters increases (see, e.g., \cite{Slinko2002a,Mossel2015}). Finally, a full analysis should take into account both uncertainty about the voting method and uncertainty about how others will vote.

\bibliographystyle{eptcs}
\bibliography{strategy}

\begin{thebibliography}{10}
\providecommand{\bibitemdeclare}[2]{}
\providecommand{\surnamestart}{}
\providecommand{\surnameend}{}
\providecommand{\urlprefix}{Available at }
\providecommand{\url}[1]{\texttt{#1}}
\providecommand{\href}[2]{\texttt{#2}}
\providecommand{\urlalt}[2]{\href{#1}{#2}}
\providecommand{\doi}[1]{doi:\urlalt{http://dx.doi.org/#1}{#1}}
\providecommand{\bibinfo}[2]{#2}

\bibitemdeclare{article}{aziz-brandl-brandt-brill}
\bibitem{aziz-brandl-brandt-brill}
\bibinfo{author}{Haris \surnamestart Aziz\surnameend}, \bibinfo{author}{Florian
  \surnamestart Brandl\surnameend}, \bibinfo{author}{Felix \surnamestart
  Brandt\surnameend} \& \bibinfo{author}{Markus \surnamestart Brill\surnameend}
  (\bibinfo{year}{2018}): \emph{\bibinfo{title}{On the tradeoff between
  efficiency and strategyproofness}}.
\newblock {\sl \bibinfo{journal}{Games and Economic Behavior}}
  \bibinfo{volume}{110}, pp. \bibinfo{pages}{1--18},
  \doi{10.1016/j.geb.2018.03.005}.

\bibitemdeclare{article}{benoit}
\bibitem{benoit}
\bibinfo{author}{Jean-Pierre \surnamestart Benoit\surnameend}
  (\bibinfo{year}{2002}): \emph{\bibinfo{title}{Strategic Manipulation in
  Voting Games When Lotteries and Ties Are Permitted}}.
\newblock {\sl \bibinfo{journal}{Journal of Economic Theory}}
  \bibinfo{volume}{102}(\bibinfo{number}{2}), pp. \bibinfo{pages}{421--436},
  \doi{10.1006/jeth.2001.2794}.

\bibitemdeclare{article}{Bogomolnaia2001}
\bibitem{Bogomolnaia2001}
\bibinfo{author}{Anna \surnamestart Bogomolnaia\surnameend} \&
  \bibinfo{author}{Herv\'{e} \surnamestart Moulin\surnameend}
  (\bibinfo{year}{2001}): \emph{\bibinfo{title}{A New Solution to the Random
  Assignment Problem}}.
\newblock {\sl \bibinfo{journal}{Journal of Economic Theory}}
  \bibinfo{volume}{100}(\bibinfo{number}{2}), pp. \bibinfo{pages}{295--328},
  \doi{10.1006/jeth.2000.2710}.

\bibitemdeclare{inproceedings}{brandt-survey}
\bibitem{brandt-survey}
\bibinfo{author}{Felix \surnamestart Brandt\surnameend} (\bibinfo{year}{2017}):
  \emph{\bibinfo{title}{Rolling the dice: Recent results in probabilistic
  social choice}}.
\newblock In \bibinfo{editor}{Ulle \surnamestart Endriss\surnameend}, editor:
  {\sl \bibinfo{booktitle}{Trends in Computational Social Choice}},
  \bibinfo{publisher}{AI Access}, pp. \bibinfo{pages}{3--26}.

\bibitemdeclare{inproceedings}{chopra-pacuit-parikh}
\bibitem{chopra-pacuit-parikh}
\bibinfo{author}{Samir \surnamestart Chopra\surnameend}, \bibinfo{author}{Eric
  \surnamestart Pacuit\surnameend} \& \bibinfo{author}{Rohit \surnamestart
  Parikh\surnameend} (\bibinfo{year}{2004}):
  \emph{\bibinfo{title}{Knowledge-theoretic properties of strategic voting}}.
\newblock In: {\sl \bibinfo{booktitle}{Proceedings of the 8th European
  Conference on Logics in Artificial Intelligence ({JELIA})}}, pp.
  \bibinfo{pages}{18--30}, \doi{10.1007/11493402_15}.

\bibitemdeclare{inproceedings}{contizer-walsh-xia}
\bibitem{contizer-walsh-xia}
\bibinfo{author}{Vincent \surnamestart Conitzer\surnameend},
  \bibinfo{author}{Tobby \surnamestart Walsh\surnameend} \&
  \bibinfo{author}{Lirong \surnamestart Xia\surnameend} (\bibinfo{year}{2011}):
  \emph{\bibinfo{title}{Dominating manipulations in voting with partial
  information}}.
\newblock In: {\sl \bibinfo{booktitle}{Proceedings of {AAAI} 2011}}, pp.
  \bibinfo{pages}{638--643}.

\bibitemdeclare{incollection}{Conitzer2016}
\bibitem{Conitzer2016}
\bibinfo{author}{Vincent \surnamestart Conitzer\surnameend} \&
  \bibinfo{author}{Toby \surnamestart Walsh\surnameend} (\bibinfo{year}{2016}):
  \emph{\bibinfo{title}{Barriers to Manipulation in Voting}}.
\newblock In: {\sl \bibinfo{booktitle}{Handbook of Computational Social
  Choice}}, \bibinfo{publisher}{Cambridge University Press}, pp.
  \bibinfo{pages}{127--145}, \doi{10.1017/CBO9781107446984.007}.

\bibitemdeclare{article}{Dowding2008}
\bibitem{Dowding2008}
\bibinfo{author}{Keith \surnamestart Dowding\surnameend} \&
  \bibinfo{author}{Martin~Van \surnamestart Hees\surnameend}
  (\bibinfo{year}{2008}): \emph{\bibinfo{title}{In Praise of Manipulation}}.
\newblock {\sl \bibinfo{journal}{British Journal of Political Science}}
  \bibinfo{volume}{38}(\bibinfo{number}{1}), pp. \bibinfo{pages}{1--15},
  \doi{10.1017/S000712340800001X}.

\bibitemdeclare{article}{DugganSchwartz:2000}
\bibitem{DugganSchwartz:2000}
\bibinfo{author}{John \surnamestart Duggan\surnameend} \&
  \bibinfo{author}{Thomas \surnamestart Schwartz\surnameend}
  (\bibinfo{year}{2000}): \emph{\bibinfo{title}{Strategic manipulability
  without resoluteness or shared beliefs: Gibbard-Satterthwaite generalized}}.
\newblock {\sl \bibinfo{journal}{Social Choice and Welfare}}
  \bibinfo{volume}{17}(\bibinfo{number}{1}), pp. \bibinfo{pages}{85--93},
  \doi{10.1007/PL00007177}.

\bibitemdeclare{article}{Faliszewski2010}
\bibitem{Faliszewski2010}
\bibinfo{author}{Piotr \surnamestart Faliszewski\surnameend} \&
  \bibinfo{author}{Ariel~D. \surnamestart Procaccia\surnameend}
  (\bibinfo{year}{2010}): \emph{\bibinfo{title}{AI's War on Manipulation: Are
  We Winning?}}
\newblock {\sl \bibinfo{journal}{AI Magazine}}
  \bibinfo{volume}{31}(\bibinfo{number}{4}), pp. \bibinfo{pages}{53--64},
  \doi{10.1609/aimag.v31i4.2314}.

\bibitemdeclare{article}{feldman}
\bibitem{feldman}
\bibinfo{author}{Allan~M. \surnamestart Feldman\surnameend}
  (\bibinfo{year}{1980}): \emph{\bibinfo{title}{Strongly nonmanipulable
  multi-valued collective choice rules}}.
\newblock {\sl \bibinfo{journal}{Public Choice}}
  \bibinfo{volume}{35}(\bibinfo{number}{4}), pp. \bibinfo{pages}{503--509},
  \doi{10.1007/BF00128127}.

\bibitemdeclare{article}{gardenfors}
\bibitem{gardenfors}
\bibinfo{author}{Peter \surnamestart G\"ardenfors\surnameend}
  (\bibinfo{year}{1976}): \emph{\bibinfo{title}{Manipulation of social choice
  functions}}.
\newblock {\sl \bibinfo{journal}{Journal of Economic Theory}}
  \bibinfo{volume}{13}, pp. \bibinfo{pages}{217--228},
  \doi{10.1016/0022-0531(76)90016-8}.

\bibitemdeclare{article}{Gibbard:73a}
\bibitem{Gibbard:73a}
\bibinfo{author}{Allan \surnamestart Gibbard\surnameend}
  (\bibinfo{year}{1973}): \emph{\bibinfo{title}{Manipulation of {V}oting
  {S}chemes: {A} {G}eneral {R}esult}}.
\newblock {\sl \bibinfo{journal}{Econometrica}}
  \bibinfo{volume}{41}(\bibinfo{number}{4}), pp. \bibinfo{pages}{587--601},
  \doi{10.2307/1914083}.

\bibitemdeclare{article}{gibbard1977}
\bibitem{gibbard1977}
\bibinfo{author}{Allan \surnamestart Gibbard\surnameend}
  (\bibinfo{year}{1977}): \emph{\bibinfo{title}{Manipulation of Schemes that
  Mix Voting with Chance}}.
\newblock {\sl \bibinfo{journal}{Econometrica}}
  \bibinfo{volume}{45}(\bibinfo{number}{3}), pp. \bibinfo{pages}{665--681},
  \doi{10.2307/1911681}.

\bibitemdeclare{article}{Guilbaud1952}
\bibitem{Guilbaud1952}
\bibinfo{author}{Georges-Th\'{e}odule \surnamestart Guilbaud\surnameend}
  (\bibinfo{year}{1952}): \emph{\bibinfo{title}{Les th\'{e}ories de
  l'int\'{e}ret gen\'{e}ral et le probl\'{e}melogique de l'agr\'{e}gation}}.
\newblock {\sl \bibinfo{journal}{Economie Appliqu\'{e}e}}
  \bibinfo{volume}{5}(\bibinfo{number}{4}), pp. \bibinfo{pages}{501--551}.

\bibitemdeclare{unpublished}{Hylland1980}
\bibitem{Hylland1980}
\bibinfo{author}{Aanund \surnamestart Hylland\surnameend}
  (\bibinfo{year}{1980}): \emph{\bibinfo{title}{Strategy proofness of voting
  procedures with lotteries as outcomes and infinite sets of strategies}}.
\newblock \bibinfo{note}{University of Oslo}.

\bibitemdeclare{inproceedings}{Ianovski2011}
\bibitem{Ianovski2011}
\bibinfo{author}{Egor \surnamestart Ianovski\surnameend}, \bibinfo{author}{Lan
  \surnamestart Yu\surnameend}, \bibinfo{author}{Edith \surnamestart
  Elkind\surnameend} \& \bibinfo{author}{Mark~C. \surnamestart
  Wilson\surnameend} (\bibinfo{year}{2011}): \emph{\bibinfo{title}{The
  Complexity of Safe Manipulation under Scoring Rules}}.
\newblock In: {\sl \bibinfo{booktitle}{Proceedings of the Twenty-Second
  International Joint Conference on Artificial Intelligence}}, pp.
  \bibinfo{pages}{246--251}, \doi{10.5591/978-1-57735-516-8/IJCAI11-052}.

\bibitemdeclare{incollection}{Jeffrey1971}
\bibitem{Jeffrey1971}
\bibinfo{author}{Richard \surnamestart Jeffrey\surnameend}
  (\bibinfo{year}{1971}): \emph{\bibinfo{title}{Statistical Explanation vs.
  Statistical Inference}}.
\newblock In \bibinfo{editor}{Wesley~C. \surnamestart Salmon\surnameend},
  editor: {\sl \bibinfo{booktitle}{Statistical Explanation and Statistical
  Relevance}}, \bibinfo{publisher}{University of Pittsburgh Press}, pp.
  \bibinfo{pages}{19--28}, \doi{10.2307/j.ctt6wrd9p.5}.

\bibitemdeclare{article}{kelly:1977}
\bibitem{kelly:1977}
\bibinfo{author}{Jerry~S. \surnamestart Kelly\surnameend}
  (\bibinfo{year}{1977}): \emph{\bibinfo{title}{Strategy-Proofness and Social
  Choice Functions Without Single-valuedness}}.
\newblock {\sl \bibinfo{journal}{Econometrica}}
  \bibinfo{volume}{45}(\bibinfo{number}{2}), pp. \bibinfo{pages}{439--446},
  \doi{10.2307/1911220}.

\bibitemdeclare{article}{Kelly1985}
\bibitem{Kelly1985}
\bibinfo{author}{Jerry~S. \surnamestart Kelly\surnameend}
  (\bibinfo{year}{1985}): \emph{\bibinfo{title}{Minimal Manipulability and
  Local Strategy-Proofness}}.
\newblock {\sl \bibinfo{journal}{Social Choice and Welfare}}
  \bibinfo{volume}{5}(\bibinfo{number}{1}), pp. \bibinfo{pages}{81--85},
  \doi{10.1007/BF00435499}.

\bibitemdeclare{book}{Meir2018}
\bibitem{Meir2018}
\bibinfo{author}{Reshef \surnamestart Meir\surnameend} (\bibinfo{year}{2018}):
  \emph{\bibinfo{title}{Strategic Voting}}.
\newblock \bibinfo{volume}{Synthesis Lectures on Artificial Intelligence and
  Machine Learning}, \bibinfo{publisher}{Morgan \& Claypool}.

\bibitemdeclare{inproceedings}{merrill}
\bibitem{merrill}
\bibinfo{author}{Samuel \surnamestart Merrill\surnameend}
  (\bibinfo{year}{1982}): \emph{\bibinfo{title}{Strategic voting in
  multicandidate elections under uncertainty and under risk}}.
\newblock In: {\sl \bibinfo{booktitle}{Power, voting, and voting power}},
  \bibinfo{publisher}{Springer}, pp. \bibinfo{pages}{179--187},
  \doi{10.1007/978-3-662-00411-1_12}.

\bibitemdeclare{article}{Mossel2015}
\bibitem{Mossel2015}
\bibinfo{author}{Elchanan \surnamestart Mossel\surnameend} \&
  \bibinfo{author}{Mikl\'{o}s~Z. \surnamestart R\'{a}cz\surnameend}
  (\bibinfo{year}{2015}): \emph{\bibinfo{title}{A quantitative
  Gibbard-Satterthwaite theorem without neutrality}}.
\newblock {\sl \bibinfo{journal}{Combinatorica}}
  \bibinfo{volume}{35}(\bibinfo{number}{3}), pp. \bibinfo{pages}{317--387},
  \doi{10.1007/s00493-014-2979-5}.

\bibitemdeclare{article}{niou}
\bibitem{niou}
\bibinfo{author}{Emerson M.~S. \surnamestart Niou\surnameend}
  (\bibinfo{year}{1987}): \emph{\bibinfo{title}{A Note on {N}anson's Rule}}.
\newblock {\sl \bibinfo{journal}{Public Choice}}
  \bibinfo{volume}{54}(\bibinfo{number}{2}), pp. \bibinfo{pages}{191--193},
  \doi{10.1007/BF00123006}.

\bibitemdeclare{article}{Nitzan1985}
\bibitem{Nitzan1985}
\bibinfo{author}{Shmuel \surnamestart Nitzan\surnameend}
  (\bibinfo{year}{1985}): \emph{\bibinfo{title}{The vulnerability of
  point-voting schemes to preference variation and strategic manipulation}}.
\newblock {\sl \bibinfo{journal}{Public Choice}}
  \bibinfo{volume}{47}(\bibinfo{number}{2}), pp. \bibinfo{pages}{349--370},
  \doi{10.1007/BF00127531}.

\bibitemdeclare{article}{nunez-pivato}
\bibitem{nunez-pivato}
\bibinfo{author}{Matias \surnamestart Nunez\surnameend} \&
  \bibinfo{author}{Marcus \surnamestart Pivato\surnameend}
  (\bibinfo{year}{2019}): \emph{\bibinfo{title}{Truth-revealing voting rules
  for large populations}}.
\newblock {\sl \bibinfo{journal}{Games and Economic Behaviour}}
  \bibinfo{volume}{113}, pp. \bibinfo{pages}{285--305},
  \doi{10.1016/j.geb.2018.09.009}.

\bibitemdeclare{article}{osborne-rubinstein}
\bibitem{osborne-rubinstein}
\bibinfo{author}{Martin \surnamestart Osborne\surnameend} \&
  \bibinfo{author}{Ariel \surnamestart Rubinstein\surnameend}
  (\bibinfo{year}{2003}): \emph{\bibinfo{title}{Sampling equilibrium, with an
  application to strategic voting}}.
\newblock {\sl \bibinfo{journal}{Games and Economic Behavior}}
  \bibinfo{volume}{45}(\bibinfo{number}{2}), pp. \bibinfo{pages}{434--441},
  \doi{10.1016/S0899-8256(03)00147-7}.

\bibitemdeclare{inproceedings}{Pacuit2019}
\bibitem{Pacuit2019}
\bibinfo{author}{Eric \surnamestart Pacuit\surnameend} (\bibinfo{year}{2019}):
  \emph{\bibinfo{title}{Voting Methods}}.
\newblock In \bibinfo{editor}{Edward~N. \surnamestart Zalta\surnameend},
  editor: {\sl \bibinfo{booktitle}{The Stanford Encyclopedia of Philosophy}},
  \bibinfo{publisher}{Metaphysics Research Lab, Stanford University}.

\bibitemdeclare{article}{Railton1981}
\bibitem{Railton1981}
\bibinfo{author}{Peter \surnamestart Railton\surnameend}
  (\bibinfo{year}{1981}): \emph{\bibinfo{title}{Probability, Explanation, and
  Information}}.
\newblock {\sl \bibinfo{journal}{Synthese}}
  \bibinfo{volume}{48}(\bibinfo{number}{2}), pp. \bibinfo{pages}{233--256},
  \doi{10.1007/BF01063889}.

\bibitemdeclare{inproceedings}{reijngoud-endriss}
\bibitem{reijngoud-endriss}
\bibinfo{author}{Annemieke \surnamestart Reijngoud\surnameend} \&
  \bibinfo{author}{Ulle \surnamestart Endriss\surnameend}
  (\bibinfo{year}{2012}): \emph{\bibinfo{title}{Voter response to iterated poll
  information}}.
\newblock In: {\sl \bibinfo{booktitle}{Proceedings of the 11th International
  Conference on Autonomous Agents and Multiagent Systems}}, pp.
  \bibinfo{pages}{635--644}.

\bibitemdeclare{phdthesis}{Satterthwaite1973}
\bibitem{Satterthwaite1973}
\bibinfo{author}{Mark \surnamestart Satterthwaite\surnameend}
  (\bibinfo{year}{1973}): \emph{\bibinfo{title}{The Existence of a Strategy
  Proof Voting Procedure}}.
\newblock Ph.D. thesis, \bibinfo{school}{University of Wisconsin}.

\bibitemdeclare{article}{Satterthwaite:75a}
\bibitem{Satterthwaite:75a}
\bibinfo{author}{Mark \surnamestart Satterthwaite\surnameend}
  (\bibinfo{year}{1975}): \emph{\bibinfo{title}{Strategy-proofness and
  {A}rrow's {C}onditions: {E}xistence and {C}orrespondence {T}heorems for
  {V}oting {P}rocedures and {S}ocial {W}elfare {F}unctions}}.
\newblock {\sl \bibinfo{journal}{Journal of Economic Theory}}
  \bibinfo{volume}{10}(\bibinfo{number}{2}), pp. \bibinfo{pages}{187--217},
  \doi{10.1016/0022-0531(75)90050-2}.

\bibitemdeclare{article}{Slinko2002a}
\bibitem{Slinko2002a}
\bibinfo{author}{Arkadii \surnamestart Slinko\surnameend}
  (\bibinfo{year}{2002}): \emph{\bibinfo{title}{On Asymptotic
  Strategy-Proofness of Classical Social Choice Rules}}.
\newblock {\sl \bibinfo{journal}{Theory and Decision}}
  \bibinfo{volume}{52}(\bibinfo{number}{4}), pp. \bibinfo{pages}{389--398},
  \doi{10.1023/A:1020240214900}.

\bibitemdeclare{article}{Slinko2014}
\bibitem{Slinko2014}
\bibinfo{author}{Arkadii \surnamestart Slinko\surnameend} \&
  \bibinfo{author}{Shaun \surnamestart White\surnameend}
  (\bibinfo{year}{2014}): \emph{\bibinfo{title}{Is it ever safe to vote
  strategically?}}
\newblock {\sl \bibinfo{journal}{Social Choice and Welfare}}
  \bibinfo{volume}{43}(\bibinfo{number}{2}), pp. \bibinfo{pages}{403--427},
  \doi{10.1007/s00355-013-0785-4}.

\bibitemdeclare{book}{Taylor2005}
\bibitem{Taylor2005}
\bibinfo{author}{Alan~D. \surnamestart Taylor\surnameend}
  (\bibinfo{year}{2005}): \emph{\bibinfo{title}{Social Choice and the
  Mathematics of Manipulation}}.
\newblock \bibinfo{publisher}{Cambridge University Press},
  \bibinfo{address}{Cambridge}, \doi{10.1017/CBO9780511614316}.

\bibitemdeclare{inproceedings}{vanditmarsch-lang-sadine}
\bibitem{vanditmarsch-lang-sadine}
\bibinfo{author}{Hans \surnamestart {van Ditmarsch}\surnameend},
  \bibinfo{author}{Jerome \surnamestart Lang\surnameend} \&
  \bibinfo{author}{Abdallah \surnamestart Saffidine\surnameend}
  (\bibinfo{year}{2013}): \emph{\bibinfo{title}{Strategic voting and the logic
  of knowledge}}.
\newblock In: {\sl \bibinfo{booktitle}{Proceedings of the 14th Conference on
  Theoretical Aspects of Rationality and Knowledge ({TARK})}}, pp.
  \bibinfo{pages}{196--205}.

\end{thebibliography}

\appendix

\section{Proof of Theorem \ref{MainThm}}

In this appendix, we prove Theorem \ref{MainThm}: for any $m\geq 4$, the sets $\{$\texttt{Borda}, \texttt{Baldwin}$\}$ and $\{$\texttt{Borda}, \texttt{StrictNanson}$\}$ eliminate sure weak dominance manipulation for $(3,m)$.

\begin{proof} Given Lemma \ref{BordaNansonIndividually}, we need only show that the two sets of methods are not susceptible to sure weak dominance manipulation for $(3,m)$. Toward a contradiction, suppose $(\mathbf{P},i)$ is a pointed $(3,m)$-profile witnessing sure weak dominance manipulation for either of the two sets by transitioning to a profile $\mathbf{P}'$. Suppose $i$'s ballot in $\mathbf{P}$ is $\alpha\beta\gamma$.

\textbf{Claim 1}: the \texttt{Borda} winning set in $\mathbf{P}'$ does not contain all three candidates. This follows by analyzing the possible sets of \texttt{Borda} winners in $\textbf{P}$. A three-way tie $\{\alpha,\beta,\gamma\}$ does not weakly dominate any of the following sets for $i$:  $\{\alpha\}$, $\{\beta\}$, $\{\alpha,\beta\}$, $\{\alpha,\gamma\}$, $\{\alpha,\beta,\gamma\}$. This leaves only the winning sets $\{\beta,\gamma\}$ and $\{\gamma\}$ to consider. But voter $i$ cannot manipulate so as to change the set of \texttt{Borda} winners from $\{\beta,\gamma\}$ to $\{\alpha,\beta,\gamma\}$; for the \texttt{Borda} scores of $\beta$ and $\gamma$ must still be tied in $\mathbf{P}'$, so $i$'s ranking in $\mathbf{P}'$ must be $\beta\gamma\alpha$, which decreases $\alpha$'s \texttt{Borda} score. In addition, voter $i$ cannot manipulate so as to change the set of \texttt{Borda} winners from $\{\gamma\}$ to $\{\alpha,\beta,\gamma\}$. For if $i$ ranks $\gamma$ higher in $\mathbf{P}'$, then $\gamma$'s \texttt{Borda} score increases by at least one, and no other candidate's \texttt{Borda} score increases by more than one, so  the set of winners is still $\{\gamma\}$; hence $i$'s ranking in $\mathbf{P}'$ must be $\beta\alpha\gamma$, but this fails to add $\alpha$ to the set of winners.

Label the candidates in order  of ascending  \texttt{Borda}  score in $\mathbf{P}'$ as $\varphi$, $\psi$, and $\chi$. Thus, by \textbf{Claim 1}, 
\begin{eqnarray}&&B'(\varphi)<B'(\psi)\leq B'(\chi)\mbox{ or }\label{B'ineq}\\
&&B'(\varphi)=B'(\psi)< B'(\chi),\label{B'ineq2}\end{eqnarray} 
where $B'$ indicates the  \texttt{Borda}  score in $\mathbf{P}'$. Let $B(\varphi),B(\psi),B(\chi)$ be the \texttt{Borda} scores of $\varphi$, $\psi$, and $\chi$ in $\mathbf{P}$.

\textbf{Claim 2}: $i$'s ranking changes from $\mathbf{P}$ to $\mathbf{P}'$ either by switching her 1st and 2nd placed candidates or by switching her 2nd and 3rd candidates; thus, one candidate's \texttt{Borda} score remains the same, and no candidate's \texttt{Borda} score changes by more than one point. Suppose for a contradiction that  $i$'s 3rd place candidate in $\mathbf{P}$ is her 1st place candidate in $\mathbf{P}'$ or her 1st place candidate in $\mathbf{P}$ is her 3rd place candidate in $\mathbf{P}'$. Then since $i$'s ranking in $\mathbf{P}$ is $\alpha\beta\gamma$, we have that $i$'s ranking in $\mathbf{P}'$ is either $\gamma\beta\alpha$, $\gamma\alpha\beta$, or $\beta\gamma\alpha$. But we claim $i$ does not have an incentive to transition to any of these rankings from the original ranking $\alpha\beta\gamma$. The first two transitions are such that the only candidate to increase in \texttt{Borda} score is $i$'s last place candidate, which never improves the winning set for $i$. The transition from $\alpha\beta\gamma$ to $\beta\gamma\alpha$ does not improve the winning set if the winning set in $\mathbf{P}$ contains $\alpha$; and if the  winning set in $\mathbf{P}$ is $\{\beta\}$, $\{\gamma\}$, or $\{\beta,\gamma\}$, again the transition from $\alpha\beta\gamma$ to $\beta\gamma\alpha$ does not change the winning set. Thus, we have a contradiction with the assumption that $(\mathbf{P},i)$ witnesses sure weak dominance manipulation for \texttt{Borda}. 

Using \textbf{Claim 2}, we can rule out case (\ref{B'ineq2}) above. For $(3,m)$, \texttt{Borda} is not \textit{single-winner manipulable} \cite[p.~57, Exercise 10]{Taylor2005}, which means that if $\{\chi\}$ is the set of  \texttt{Borda} winners in $\mathbf{P}'$, then the set of  \texttt{Borda} winners in $\mathbf{P}$ cannot be a singleton, so it must be one of $\{\varphi,\psi,\chi\}$, $\{\psi,\chi\}$, $\{\varphi,\chi\}$, or $\{\varphi,\psi\}$. By \textbf{Claim 2}, there is no way $i$ can change the \texttt{Borda}  scores from $B(\varphi)=B(\psi)=B(\chi)$ to $B'(\varphi)=B'(\psi)< B'(\chi)$, so we can rule out $\{\varphi,\psi,\chi\}$. Also by \textbf{Claim 2}, in order for $i$ to change the \texttt{Borda}  scores from $B(\varphi)<B(\psi)=B(\chi)$ to $B'(\varphi)=B'(\psi)< B'(\chi)$, her ranking must go from $\psi\chi\varphi$ to $\chi\psi\varphi$ or from $\varphi\psi\chi$ to $\varphi\chi\psi$, but then the new winning set $\{\chi\}$ is worse for $i$ than the original $\{\psi,\chi\}$. Thus, we can rule out $\{\psi,\chi\}$, and by the same reasoning, $\{\varphi,\chi\}$. Finally, by \textbf{Claim 2}, there is no way for $i$ to change the \texttt{Borda} scores from $B(\chi)<B(\varphi)=B(\psi)$ to $B'(\varphi)=B'(\psi)< B'(\chi)$, so we can rule out $\{\varphi,\psi\}$. Thus,  (\ref{B'ineq}) holds.

\textbf{Claim 3}: $\varphi$ has the unique below average \texttt{Borda} score in $\mathbf{P}$ and $\mathbf{P}'$. We argue by cases.

Case 1: $B'(\psi)=B'(\chi)$. Then it is immediate from (\ref{B'ineq}) that $\varphi$ has the unique below average \texttt{Borda} score in $\mathbf{P}'$. We now show that $\varphi$ has the unique below average \texttt{Borda} score in $\mathbf{P}$. Given $B'(\psi)=B'(\chi)$ and $B'(\varphi)+B'(\psi)+B'(\chi)=3m$,  we have
\begin{equation}B'(\varphi)+2B'(\psi)=3m.\label{2B'eq}\end{equation}
We claim that 
\begin{equation}B'(\psi)-B'(\varphi)  > 2.\label{ClaimEq}\end{equation}
 For $B'(\psi)-B'(\varphi)\neq 0$ since $\varphi$ was chosen as the candidate with the lowest \texttt{Borda} score in $\mathbf{P}'$, and if $B'(\psi)-B'(\varphi)=k$ for $k\in\{1,2\}$, so $B'(\varphi)=B'(\psi) -k$, then from (\ref{2B'eq}) we have $3B'(\psi)-k = 3m$, a contradiction. Now by \textbf{Claim 2}, we have $| B(\varphi)-B'(\varphi) | \leq 1$, $| B(\psi)-B'(\psi) | \leq 1$, and $| B(\chi)-B'(\chi) | \leq 1$. It follows by (\ref{ClaimEq}) that $B(\varphi) < B(\psi)$ and $B(\varphi)<B(\chi)$. Thus, $\varphi$ has a below average \texttt{Borda} score in $\mathbf{P}$. Finally, we claim that $\varphi$ is the only candidate with a below average \texttt{Borda} score in $\mathbf{P}$. Since the average \texttt{Borda} score is $m$, this means $B(\psi)\geq m$ and $B(\chi)\geq m$. Suppose for contradiction that $B(\psi)< m$ or $B(\chi)< m$. Without loss of generality, suppose $B(\psi)< m$. By \textbf{Claim 2} and the fact that $B'(\psi)=B'(\chi)$, we have $|B(\psi)-B(\chi)|\leq 2$. But together $B(\varphi)< B(\psi)$, $B(\psi)< m$, and $|B(\psi)-B(\chi)|\leq 2$ contradict the fact that $B(\varphi)+B(\psi)+B(\chi)=3m$. Thus, $\varphi$ has the unique below average \texttt{Borda} score in $\mathbf{P}$.
 
 Case 2: $B'(\psi)<B'(\chi)$. First, we claim it is not the case that $B(\varphi)=B(\psi)=B(\chi)$. It follows from \textbf{Claim 2} that there are only two ways to go from $B(\varphi)=B(\psi)=B(\chi)$ to ${B'(\varphi)<B'(\psi)<B'(\chi)}$: $i$'s ranking goes from $\varphi\chi\psi$ to $\chi\varphi\psi$ or from $\psi\varphi\chi$ to $\psi\chi\varphi$. But in both cases the new  \texttt{Borda} winning set $\{\chi\}$ does not weakly dominate the old winning \texttt{Borda} set $\{\varphi,\psi,\chi\}$, contradicting the assumption that $i$ had an incentive to manipulate. In addition, it is not the case that $B(\varphi)= B(\psi) < B(\chi)$ or $B(\varphi)< B(\psi) < B(\chi)$, for then $i$ would have no incentive to transition to $B'(\varphi)<B'(\psi)<B'(\chi)$, since the \texttt{Borda} winning set would not change. Thus, we have that $B(\varphi) < B(\psi)=B(\chi)$, so $\varphi$ has the unique below average \texttt{Borda} score in $\mathbf{P}$. Now by \textbf{Claim 2}, the \texttt{Borda} score of one of $\varphi$, $\psi$, and $\chi$ must remain the same from $\mathbf{P}$ to $\mathbf{P}'$. But we cannot have $B(\varphi)=B'(\varphi)$, for then in order to go from $B(\psi)=B(\chi)$ to $B'(\psi)<B'(\chi)$, by \textbf{Claim 2} $i$'s ranking must either change from $\psi\chi\varphi$ to $\chi\psi\varphi$ or from $\varphi\psi\chi$ to $\varphi\chi\psi$; but in both cases the new \texttt{Borda} winning set $\{\chi\}$ does not weakly dominate the old winning \texttt{Borda} set $\{\psi,\chi\}$, contradicting the assumption that $i$ had an incentive to manipulate. Thus, either $B(\psi)=B'(\psi)$ or $B(\chi)=B'(\chi)$. But if $B(\psi)=B'(\psi)$ or $B(\chi)=B'(\chi)$, then together $B(\psi)=B(\chi)$ and $B'(\psi)<B'(\chi)$ imply $|B'(\psi)- B'(\chi)|=1$ by \textbf{Claim 2}, in which case $B'(\varphi)<B'(\psi)<B'(\chi)$ implies that $\varphi$ has the unique below average \texttt{Borda} score in $\mathbf{P}'$. 
 
\textbf{Claim 4}: the majority ordering between $\psi$ and $\chi$ does not change from $\mathbf{P}$ to $\mathbf{P}'$. This follows from the claim that $i$'s ordering of $\psi$ and $\chi$ does not change from $\mathbf{P}$ to $\mathbf{P}'$. We know $\varphi$ is not in the set of \texttt{Borda} winners in $\mathbf{P}$ or in $\mathbf{P}'$ by \textbf{Claim 3}, and the set of \texttt{Borda} winners in $\mathbf{P}'$ is not $\{\varphi,\psi,\chi\}$ by \textbf{Claim 1}.

Case 1: the \texttt{Borda} winning set in $\mathbf{P}'$ is $\{\psi,\chi\}$, so $B'(\psi)=B'(\chi)$. Suppose for contradiction that $i$'s ranking for $\psi$ vs. $\chi$ in $\mathbf{P}$ is the reverse of $i$'s ranking in $\mathbf{P}'$. Suppose, without loss of generality, that $i$ ranks $\psi$ over $\chi$ in $\mathbf{P}$ but $i$ ranks $\chi$ over $\psi$ in $\mathbf{P}'$. Then since in $\mathbf{P}'$, we have $B'(\psi)=B'(\chi)$, it follows that $B(\psi)>B(\chi)$, which with \textbf{Claim 3} implies that $\psi$ is the unique \texttt{Borda} winner in $\mathbf{P}$. But then there is no incentive for $i$ to manipulate, as $i$ would do worse with the winning set $\{\psi,\chi\}$ in $\mathbf{P}'$. This contradicts our assumption that $(\mathbf{P},i)$ witnesses manipulation for \texttt{Borda} by transitioning to $\mathbf{P}'$.

Case 2: the \texttt{Borda} winning set in $\mathbf{P}'$ is $\{\chi\}$, so $B'(\psi)< B'(\chi)$. Since $\varphi$ is not in the \texttt{Borda}  winning set in $\mathbf{P}$ by \textbf{Claim 3}, the \texttt{Borda} winning set in $\mathbf{P}$ is  $\{\psi,\chi\}$ or $\{\psi\}$, but the latter is ruled out since \texttt{Borda} is not single-winner manipulable. Since by assumption the winning set in $\mathbf{P}'$ weakly dominates that in $\mathbf{P}$ for $i$, it follows that $i$ ranks $\chi$ above $\psi$ in $\mathbf{P}$. But then the transition from the \texttt{Borda} winning set $\{\psi,\chi\}$ to  $\{\chi\}$ cannot be the result of $i$ switching her ranking of $\chi$ and $\psi$ so that $\psi$ is ranked above $\chi$ in $\mathbf{P}'$.

For any $(3,m)$-profile, the set of \texttt{Baldwin} (resp.~\texttt{StrictNanson}) winners  may be obtained by first eliminating the candidates (if there are any) with the strictly lowest \texttt{Borda} scores (resp.~with below average \texttt{Borda} scores) and then selecting as winners from the remaining candidates those who are maximal in the majority ordering. Thus, by \textbf{Claim 3} and \textbf{Claim 4}, the sets of \texttt{Baldwin} winners and \texttt{StrictNanson} winners do not change from $\mathbf{P}$ to $\mathbf{P}'$. This contradicts the assumption that $(\mathbf{P},i)$ witnesses sure weak dominance manipulation for $\{\texttt{Borda}, \texttt{Baldwin}\}$ or $\{\texttt{Borda}, \texttt{StrictNanson}\}$ by transitioning to $\mathbf{P}'$.\end{proof}

\section{Proof of Theorem \ref{MainThm2}}

In this appendix, we prove Theorem \ref{MainThm2}: for any $m\geq 4$, $\{$\texttt{WeakNanson}, \texttt{Baldwin}$\}$ and $\{$\texttt{WeakNanson}, \texttt{StrictNanson}$\}$ eliminate sure weak dominance manipulation for $(3,m)$.

\begin{proof} Given Lemma \ref{BordaNansonIndividually}, we need only show that the two sets of methods are not susceptible to sure weak dominance manipulation for $(3,m)$. Consider an arbitrary pointed profile $(\mathbf{P},i)$. We show that for any profile $\mathbf{P}'$ differing from $\mathbf{P}$ only in $i$'s ranking, $(\mathbf{P},i)$ does not witness sure weak dominance manipulation for $\{\texttt{WeakNanson}, \texttt{Baldwin}\}$ or $\{\texttt{WeakNanson}, \texttt{StrictNanson}\}$ by transitioning to $\mathbf{P}'$. Let $i$'s ranking in $\mathbf{P}$ be $\alpha\beta\gamma$. Let $B(\alpha)$, $B(\beta)$, and $B(\gamma)$ be the \texttt{Borda} scores of $\alpha$, $\beta$, and $\gamma$, respectively, in $\mathbf{P}$, and likewise for  $B'(\alpha)$, $B'(\beta)$, and $B'(\gamma)$ in $\mathbf{P}'$. For any $(3,m)$-profile, the set of \texttt{WeakNanson} (resp.~\texttt{StrictNanson}, \texttt{Baldwin}) winners  may be obtained by first eliminating the candidates (if there are any) with less than or equal to average \texttt{Borda} scores (resp.~with below average \texttt{Borda} scores, with strictly lowest \texttt{Borda} scores) and then selecting as winners from the remaining candidates those who are maximal in the majority ordering. Note that the average $\texttt{Borda}$ score for any given $(3,m)$-profile is $m$. We consider the following exhaustive list of cases, where `WN' stands for \texttt{WeakNanson}:

1. $B(\beta),B(\gamma)\leq m < B(\alpha)$;  or $B(\beta)\leq m < B(\alpha), B(\gamma)$ and  $\alpha>^M_\mathbf{P}\gamma$; or $B(\gamma)\leq m < B(\alpha), B(\beta)$ and $\alpha>^M_\mathbf{P}\beta$. Then $\{\alpha\}$ is the WN-winning set in $\mathbf{P}$, so $i$ has no incentive to transition from $\mathbf{P}$ to~$\mathbf{P}'$

2. $B(\alpha)\leq m < B(\beta),B(\gamma)$ and $\beta>^M_\mathbf{P}\gamma$, so $\{\beta\}$ is the WN-winning set in $\mathbf{P}$. Then $B'(\alpha)\leq m<B'(\gamma)$, so $\alpha$ is not in the WN-winning set in $\mathbf{P}'$. But only sets that contain $\alpha$ weakly dominate $\{\beta\}$ for $i$.

3. $B(\alpha)\leq m < B(\beta),B(\gamma)$ and $\gamma>^M_\mathbf{P}\beta$, so $\{\gamma\}$ is the WN-winning set in $\mathbf{P}$. In this case $B'(\alpha)\leq m \leq  B'(\beta)$, $m< B'(\gamma)$, and $\gamma>^M_{\mathbf{P}'}\beta$ for any $\mathbf{P}'$ differing only in $i$'s ranking. So the winning set in $\mathbf{P}'$ is again $\{\gamma\}$, providing no  incentive under WN to transition to $\mathbf{P}'$.

4. $B(\alpha)\leq m < B(\beta),B(\gamma)$ and $\gamma=^M_\mathbf{P}\beta$, so $\{\beta,\gamma\}$ is the WN-winning set in $\mathbf{P}$. In this case $B'(\alpha)\leq m \leq  B'(\beta)$, $m< B'(\gamma)$, and $\gamma\geq ^M_{\mathbf{P}'}\beta$ for any $\mathbf{P}'$ differ only in $i$'s ranking. So the WN-winning set in $\mathbf{P}'$ is $\{\gamma\}$ or $\{\beta,\gamma\}$, neither of which weakly dominates $\{\beta,\gamma\}$ for $i$.

5. $B(\beta)\leq m < B(\alpha), B(\gamma)$ and  $\gamma \geq^M_\mathbf{P}\alpha$, so the WN-winning set  in $\mathbf{P}$ is either $\{\gamma\}$ or $\{\alpha,\gamma\}$. It follows that $B(\beta)\leq m-2$ and hence $B'(\beta)<m$, and $\gamma\geq ^M_{\mathbf{P}'}\alpha$ for any $\mathbf{P}'$ differ only in $i$'s ranking. Thus, if $\{\gamma\}$ is the WN-winning set in $\mathbf{P}$, it is the WN-winning set in $\mathbf{P}'$; and if $\{\alpha,\gamma\}$ is the WN-winning set in $\mathbf{P}$, then the WN-winning set in $\mathbf{P}'$ contains $\gamma$, but no set containing $\gamma$ weakly dominates $\{\alpha,\gamma\}$.

6. $B(\gamma)\leq m < B(\alpha), B(\beta)$, and $\beta\geq^M_\mathbf{P}\alpha$, so the WN-winning set is either $\{\beta\}$ or $\{\alpha,\beta\}$. It follows that $B(\gamma)< m$ and indeed $B(\gamma)\leq m-2$. Hence $B'(\gamma)\leq m$. Thus, $\gamma$ is eliminated for WN in the first round for $\mathbf{P}'$. Since $\beta\geq^M_\mathbf{P}\alpha$ with $i$ having the ranking $\alpha\beta\gamma$ in $\mathbf{P}$, it follows that $\beta\geq^M_{\mathbf{P}'}\alpha$ for any $\mathbf{P}'$ differing from $\mathbf{P}$ only in $i$'s ranking. Thus, if $\{\beta\}$ is the WN-wining set in $\mathbf{P}$, it is still the WN-winning set in $\mathbf{P}'$; and if $\{\alpha,\beta\}$ is the WN-winning set in $\mathbf{P}$, then the WN-winning set in $\mathbf{P}'$ contains $\beta$, but no set containing $\beta$ weakly dominates $\{\alpha,\beta\}$.

7.  $B(\alpha),B(\gamma)\leq m\leq B(\beta)$, so the WN-winning set in $\mathbf{P}$ is $\{\beta\}$ or $\{\alpha,\beta,\gamma\}$.  In this case $B'(\alpha)\leq m$, so the WN-winning set in $\mathbf{P}'$ is either $\{\alpha,\beta,\gamma\}$ or a set not containing $\alpha$, and in both cases the WN-winning set in $\mathbf{P}'$ does not weakly dominate the WN-winning set in $\mathbf{P}$.

8. $B(\alpha),B(\beta)\leq m<B(\gamma)$, so $\{\gamma\}$ is the WN-winning set in $\mathbf{P}$. Hence $B'(\alpha)\leq m <B'(\gamma)$. If $B'(\beta)\leq m$, then $\{\gamma\}$ is still the WN-winning set in $\mathbf{P}'$. So suppose $m<B'(\beta)$, which implies $B(\beta)=m$, $B(\alpha)<m$, and $B'(\alpha)<m$. There are only two post-manipulation rankings consistent with $m<B'(\beta)$: $\beta\alpha\gamma$ and $\beta\gamma\alpha$. Note that the majority ordering of $\beta$ and $\gamma$ has not changed from $\mathbf{P}$ to $\mathbf{P}'$, and in both $\mathbf{P}$ and $\mathbf{P}'$, $\alpha$ is the unique candidate with below average \texttt{Borda} score. It follows that the \texttt{StrictNanson} winning set in $\mathbf{P}$ and $\mathbf{P}'$ is the same, and the \texttt{Baldwin} winning set in $\mathbf{P}$ and $\mathbf{P}'$ is the same. Hence the transition from $\mathbf{P}$ to $\mathbf{P}'$ does not witness sure manipulation.  \qedhere
\end{proof}

\end{document}